\newtheorem{lemma}{Lemma}[section]
\newtheorem{proposition}{Proposition}[section]
\newtheorem{theorem}{Theorem}[section]
\newtheorem{proof}{Proof}[section]
\crefname{thm}{theorem}{theorems}
\crefname{prop}{proposition}{propositions}
\crefname{cor}{corollary}{Corollary}
\crefname{defn}{definition}{Definition}
\crefname{defnb}{definition}{Definition}
\newcommand{\bea}{\begin{eqnarray}}
\newcommand{\eea}{\end{eqnarray}}
\newcommand{\beq}{\begin{equation}}
\newcommand{\eeq}{\end{equation}}
\newcommand{\vp}{\varphi}
\newcommand{\vpb}{\bar{\varphi}}
\newcommand{\mS}{\mathcal{S}}
\newcommand{\cG}{\mathcal{G}}
\newcommand{\mbC}{\mathbb{C}}
\newcommand{\mbZ}{\mathbb{Z}}
\begin{document}
\title{Beta Functions of $U(1)^d$ Gauge Invariant  Just Renormalizable Tensor Models }%
\author{Dine Ousmane Samary}%
\email[ ]{ousmanesamarydine@yahoo.fr,\,\,\,dine.ousmanesamary@cipma.uac.bj}
\affiliation{International Chair in Mathematical Physics and Applications\\
  ICMPA-UNESCO Chair, 072\,BP\,50, Cotonou,  B\'enin}
\date{April 02, 2013}%
\begin{abstract}
This manuscript reports the first order $\beta$-functions of recently proved just renormalizable random tensor models
endowed  with  a $U(1)^d$ gauge invariance [arXiv:1211. 2618]. The models that we consider are polynomial Abelian 
$\vp^4_6$ and $\vp^6_5$ models. We show in this work that  both  models are  asymptotically free in the UV. 
\end{abstract}
\maketitle
\tableofcontents
\section{Introduction}
Many interesting physical systems can be represented mathematically as random matrix problems. In particular, matrix models, celebrated in the 80's, provide a unique and well defined framework for addressing quantum gravity (QG) in two dimensions and its cortege of consequences on
integrable systems \cite{Di Francesco:1993nw}. The generalization of such models to higher dimensions is called random tensor models \cite{ambj3dqg}. Recently, these tensor models have acknowledged a strong revival thanks to the discovery by Gurau of the analogue of the t'Hooft $1/N$-expansion 
for the tensor situation \cite{Gurau:2010ba}-\cite{Gurau:2013cbh} and of tensor renormalizable actions  \cite{Ben-Geloun2011aa}-\cite{Samary:2012bw}. The tensor model framework begins to take a growing role in the problem of QG and raises as a true alternative to several known approaches \cite{Oriti:2006se,Rivasseau2011ab, Rivasseau2012ab}.

Tensorial group field theory (TGFT) \cite{Rivasseau2011ab, Rivasseau2012ab} is a recent proposal for the same problematic. It aims at providing a content to a phase transition called geometrogenesis scenario by relating a discrete quantum pre-geometric phase of our spacetime to the classical continuum limit consistent with Einstein gereral relativity. In short, within this approach, our spacetime and its geometry has to be reconstructed or must emerge from more fundamental and discrete degrees of freedom. 

Matrix models expand in graphs via ordinary perturbations of the Feynman path integral. These graphs can  be seen as dual to triangulations of two dimensional surfaces. Here, the discrete degrees of freedom refer to matrices, or more appropriately to their indices, or dually to triangles which glue to form a discrete version of a surface.  In tensor models, this idea generalizes. Feynman graphs in such tensor models are dual to triangulations of a $D$ dimensional object. 
The tensor field possesses discrete indices and it is dually related to a basic $D$
dimensional simplex which should be glued to others in order to form
a discretization of a $D$ dimensional manifold.

As for any quantum field theory, the question of renormalizability of TGFT 
has been addressed and solved under specific prescriptions \cite{Ben-Geloun2011aa}-\cite{Samary:2012bw}. Those conditions identify as 
the introduction of a Laplacian dynamics  for the action kinetic term \cite{Geloun:2011cy} and the use of non local interaction of the tensor invariant form \cite{Gurau:2011tj, Bonzom2012ac}.
Furthermore, as another important feature, the UV asymptotic freedom of some TGFTs has been proved in 3D \cite{Ben-Geloun2012aa} and 4D \cite{BenGeloun:2012yk} (see also \cite{Geloun:2012qn} for a shorter summary). This is strongly encouraging for the geometrogenesis
scenario. Indeed, the asymptotic freedom means that, after some scales towards the IR direction,
the renormalized coupling constant of the theory starts to blow up and, certainly, this entails a phase transition towards new degrees of freedom. This is analogue of the asymptotic freedom of non abelian Yang Mills theory leading to the better understanding of the quark confinement. However, the new degrees of freedom
in TGFTs have been not yet investigated.

New TGFT models, of the form of $\vp_6^4$ and $\vp_5^6$ theories, 
equipped with tensor fields obeying a gauge invariance condition were recently shown  just renormalizable at all orders of  perturbation \cite{Samary:2012bw}. The gauge invariant condition on tensor fields will help 
for the emergence of a well defined metric on the space after phase transition
\cite{Oriti:2006se,Carrozza2012aa}. 
The renormalization of the model followed from a multi-scale analysis and a generalized locality principle leading to a power-counting theorem 
\cite{Riv1}. 

In the present work, we calculate the first order $\beta$-function of both 
models and prove that these models are asymptotically free in
the UV regime. This paper also emphasizes that this asymptotic freedom could be a generic feature of all TGFTs for model with and without gauge invariance \cite{Rivasseau2012ab}. Such a feature will strengthen the status of TGFTs as pertinent candidates for gravity emergent scenario.  

The paper is organized as follows. We recall in section 2
the main results concerning the renormalizability of $\vp_6^4$ and $\vp_5^6$-tensor models as proved in \cite{Samary:2012bw}. 
 Section 3 is devoted to the study of the one-loop $\beta$-function 
of the $\vp_6^4$-model and section 4 addresses the computation of the same quantity, this time at higher order loops, for the $\vp_5^6$-model.
Finally, an appendix gathers technical points useful for the proof of our statements. 

\section{Abelian TGFT  with gauge invariance}

This section addresses a summary of the results obtained in   \cite{Samary:2012bw}.  We mainly present the model and its renormalization.

TGFTs over a group $G$ are defined by a complex field $\vp$ over  $d$ copies of group $G$, i.e.
\bea
\begin{array}{cccl}
\vp:  &G^d & \longrightarrow & \mbC \\
      & (g_1,\cdots, g_d) &\longmapsto & \vp(g_1,\cdots, g_d)\,.
\end{array}
\eea
The gauge invariance condition \cite{Oriti:2006se} is achieved by 
imposing that the fields obey the relation
\bea\label{Gauge-condition}
\vp(hg_{1},\dots,hg_{d})=\vp(g_{1},\dots,g_{d}),\quad \forall h\in G\,.
\eea
For Abelian TGFTs,  one fixes the group  $G=U(1)$.
In the momentum representation, the  field writes
\bea
&&\vp(g_1, \cdots, g_d)=\sum_{p}\vp_{[p]}e^{ip_1\theta_1}e^{ip_2\theta_2}\cdots e^{ip_d\theta_d},\quad  \theta_k\in [0,2\pi)\nonumber,
\eea
where we denote  $\vp_{[p]}=\vp_{12\cdots d}:=\vp(p_1,p_2,\cdots, p_d)$, with $p_k\in \mbZ$ and $g_k=e^{i\theta_k}\in U(1)$.

The generalized locality principle of the TGFTs considered in \cite{Samary:2012bw} requires to define the interactions as the sum of tensor invariants \cite{Gurau:2010ba}.  From now, we will focus on $d=6,5,$ and define two models described by
\bea
  \mathfrak{S}_{4}[\vpb,\vp]&=&\sum_{p_1,\cdots, p_6}\vpb_{654321}\,\delta(\sum_{i}^6p_{i})(p^{2}+m^{2})\,\vp_{123456}+
\textstyle{\frac12} \lambda^{(4)}_{4,1}\,V^6_{4,1},\label{eq:Action4}\\
  \mathfrak{S}_{6}[\vpb,\vp]&=&\sum_{p_1,\cdots, p_5}\vpb_{54321}\,\delta(\sum_{i}^5p_{i})(p^{2}+m^{2})\,\vp_{12345}\nonumber\\
  &+&\textstyle{\frac12} \lambda^{(6)}_{4,1}\,V^5_{4,1}+
\textstyle{\frac12} \lambda_{4,2}V_{4,2}+
\textstyle{\frac13} \lambda_{6,1}V_{6,1}+
\lambda_{6,2}V_{6,2},
\label{eq:Action6}
\eea
where $\delta(\sum_{i}^dp_{i})$ should be understood as a Kronecker symbol
$\delta_{\sum_{i}^dp_{i},0}$ and $p^2 = \sum_{i}^dp_i^2$, $d=6,5,$ respectively,   and where the interactions are of the  form given by

\begin{figure}[htbp]
\begin{center}
{{\label{ClosedEx}}\includegraphics[scale=0.7]{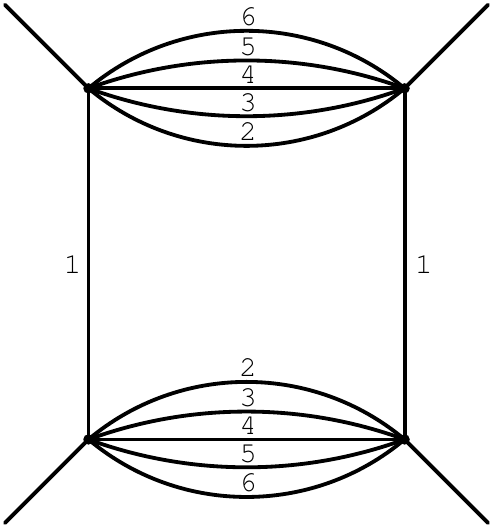}}\hspace{2cm}
\end{center}
 \caption{Vertex representation of $\vp_6^4$-model}
  \label{fig:Vertex46} 
\end{figure}

\begin{figure}[htbp]
$A=V_{6,1}${ {\label{ClosedExttt}}\includegraphics[scale=0.5]{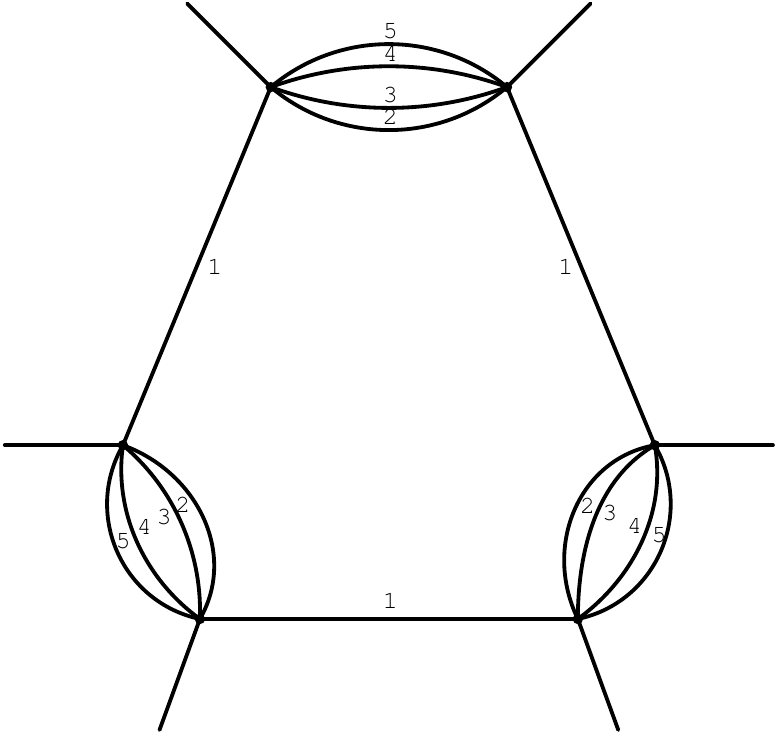}}\hspace{3cm}
  $B=V_{6,2}$ {{\label{OpenExttt}}\includegraphics[scale=0.6]{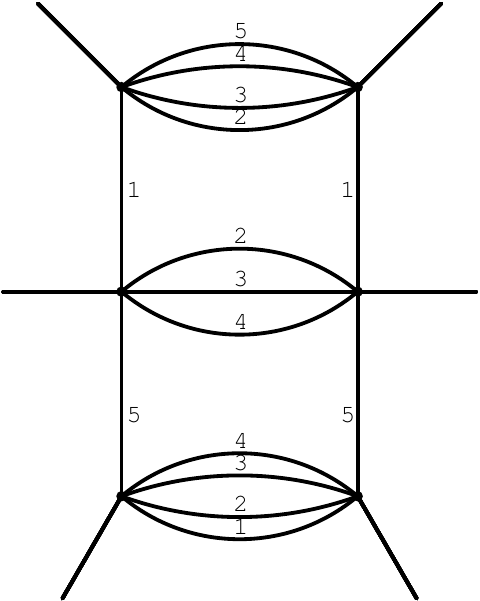}}\\
$C=V_{4,1}^5$ {{\label{OpenExttt}}\includegraphics[scale=0.6]{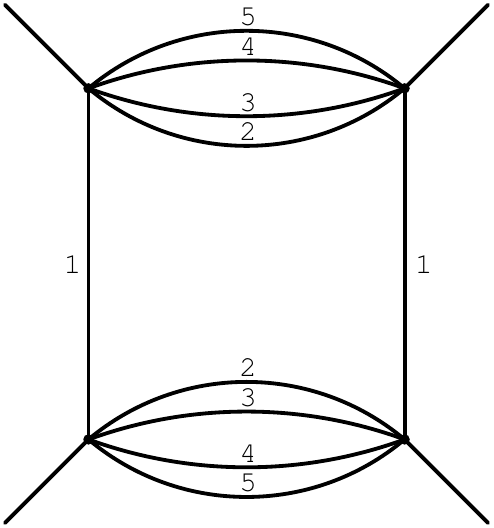}}\hspace{3cm}
 $D=V_{4,2}$ {{\label{ClosedExjjj}}\includegraphics[scale=0.6]{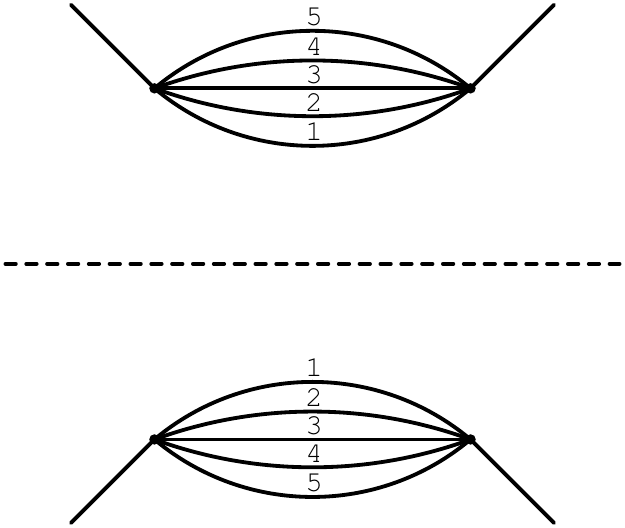}}
  \caption{Vertex representation of $\vp_5^6$-model}
  \label{fig:Vertex65}
\end{figure}

\begin{widetext}
\bea
  \label{eq:vertex}
  V^{6}_{4,1}&=&\sum_{\mathbb{Z}^{12}}\vpb_{654321}\,\vp_{12'3'4'5'6'}\,\vpb_{6'5'4'3'2'1'}\,\vp_{1'23456}+\mbox{
  permutations} ,\\
\label{eq:vertex2}
 V^{5}_{4,1}&=&  \sum_{\mathbb{Z}^{12}}\vpb_{54321}\,\vp_{12'3'4'5'}\,\vpb_{5'4'3'2'1'}\,\vp_{1'2345}+\mbox{
  permutations},\\
 V_{4,2}&=&\big(\sum_{\mathbb{Z}^{5}}\vpb_{54321}\vp_{12345}\big)^{2},\\
  V_{6,1}&=&\sum_{\mathbb{Z}^{15}}\vpb_{54321}\vp_{1'2345}\vpb_{5'4'3'2'1'}\vp_{1"2'3'4'5'}\vpb_{5"4"3"2"1"}\vp_{12"3"4"5"}+\mbox{
    permuts}\label{eq:V61},\\
  V_{6,2}&=&\sum_{\mathbb{Z}^{15}}\vpb_{54321}\vp_{1'2345}\vpb_{5'4'3'2'1'}\vp_{1"2"3"4"5'}\vpb_{5"4"3"2"1"}\vp_{12'3'4'5"}+\mbox{
    permutations.}\label{eq:V62}
\eea
\end{widetext}

The ``permutations'' are performed on the color indices. 
The vertices are graphically represented in \cref{fig:Vertex46} and \cref{fig:Vertex65}. 
As one notices, there is two kinds of lines in the  vertices. 
The first type are parametrized by $1,2,\dots,d$ and one external half-line
without any number. Call by 0 the color of this half-line.

The propagator of each model reads:  
\bea
C([p])=\frac{1}{\sum_{i=1}^dp_i^2+m^2}\delta(\sum_{i=1}^dp_i), \qquad 
d=6,5,
\eea
and it is represented graphically as a line with $d$ strands, see \cref{fig:propa}.

A Feynman graph is a graph composed with lines of color 
0 (propagators) and vertices. Hence, whenever we refer to a line
in the following it will be always a 0-color line and $\cG$ is an uncolored tensor graph in the sense of \cite{Gurau:2010ba} and \cite{Bonzom2012ac} which have $d$-strand lines  of color $0$.

Let $\mathcal{L}$ and $\mathcal{F}$ be the sets of internal lines and faces of the   graph  $\cG$.
The multi-scale analysis shows that the divergence degree of the amplitude of  a graph associated with both models can be written 
\beq
\omega_d(\cG)=2L-F+R
\eeq
 where $L= |\mathcal{L}|$, $F= |\mathcal{F}|$ and  $R$ is the rank of matrix $\big(\epsilon_{lf},\, l\in\mathcal{L},\, f\in\mathcal{F}\big),$ defined by
 \bea
    \epsilon_{lf}(\cG)=\left\{\begin{array}{ccc}
     1&\mbox{ if $l\in f$ and their orientation match,}\\
      -1&\mbox{ if $l\in f$ and their orientation do not match,}\\
     0&\mbox{otherwise.}
   \end{array}\right.
\label{eq:Epslf}
  \eea
\begin{figure}[htbp]
{{\label{ClosedEx}}\includegraphics[scale=0.6]{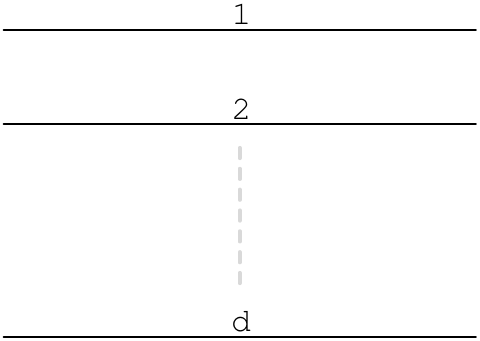}}\hspace{2cm}
  \caption{Propagator of $d$-dimensional tensor model}
  \label{fig:propa}
\end{figure}

The following statement  holds \cite{Samary:2012bw}:

\begin{theorem}
  \label{thm-PertRen}
 The models $\vp^4_6$ defined  by $ \mathfrak{S}_{4}$
and $\vp^6_5$ defined  by $ \mathfrak{S}_{6}$ are perturbatively renormalizable at all orders. 
\end{theorem}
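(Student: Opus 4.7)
The plan is to follow a multi-scale renormalization strategy in the spirit of Rivasseau, combined with a combinatorial analysis of the tensor-invariant structure of divergent subgraphs. First, I would decompose the propagator into momentum slices, $C = \sum_{i\geq 0} C_i$, with $C_i$ supported on momenta of order $M^i$ and satisfying a bound of the form $|C_i([p])| \leq K\,M^{-2i} e^{-M^{-2i}(p^2+m^2)}\,\delta(\sum_k p_k)$. A scale attribution $\mu$ assigning a slice index to every internal line of a graph $\cG$ organizes the subgraphs into a poset of \emph{quasi-local} subgraphs $G^i_k$, defined as the connected components of lines of scale $\geq i$. Optimizing the integration of face momenta per slice and summing over $\mu$ at the end, the bound $|A_{\cG,\mu}| \leq K^{L(\cG)} \prod_{(i,k)} M^{\omega_d(G^i_k)}$ with $\omega_d = 2L - F + R$ then follows from the standard trade-off: each sliced propagator contributes $M^{-2i}$, each independent face integration contributes $M^i$, and the gauge constraints $\delta(\sum_k p_k)$ at the vertices reduce the number of independent face variables by the rank $R$ of $(\epsilon_{lf})$.

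Next, I would enumerate the primitively divergent subgraphs. For $\vp^4_6$, a melonic-type combinatorial argument together with the explicit form of $\omega_d$ shows that only quasi-local subgraphs with $N_{\text{ext}} \in \{2,4\}$ can have $\omega_d \geq 0$; for $\vp^6_5$ one also finds an $N_{\text{ext}} = 6$ sector. Examining the boundary tensor-invariant of each divergent subgraph case by case, one checks that: the $N_{\text{ext}}=4$ divergences reproduce $V^6_{4,1}$ in the first model (respectively $V^5_{4,1}$ and $V_{4,2}$ in the second); the $N_{\text{ext}}=6$ divergences reproduce $V_{6,1}$ and $V_{6,2}$; and the $N_{\text{ext}}=2$ divergences have, after Taylor expansion in the external momenta, a part proportional to the identity tensor and a part proportional to $p^2$, both absorbed into mass and wave-function renormalizations of the kinetic term $p^2 + m^2$.

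Finally, I would close the argument with a BPHZ-style forest formula adapted to the multi-scale hierarchy: subtracting the local part of every high divergent quasi-local subgraph via a Taylor expansion in its external data up to the marginal order produces convergent renormalized amplitudes, because each subtraction lowers $\omega_d$ by at least one and therefore upgrades the bound above into a geometric series in $M^{-1}$ summable over all scale attributions. Because each counterterm has the same tensor-invariant form as an existing term in $\mathfrak{S}_d$, the bare action is stable under renormalization at every order of perturbation. The main obstacle is the second step: verifying that the leading divergent contribution of every $\omega_d \geq 0$ quasi-local subgraph is tensor-invariant and that it reproduces precisely one of the finitely many interactions already present in the bare action. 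Concretely this requires tracking how the gauge constraints $\delta(\sum_k p_k)$ at the vertices couple with the face structure so as to extract the boundary graph of a divergent subgraph, and checking that the resulting list closes exactly on $\{V^d_{4,1}, V_{4,2}, V_{6,1}, V_{6,2}\}$ together with mass and wave-function terms, without generating any new type of invariant outside the bare action.
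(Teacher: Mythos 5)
Your plan is correct in outline and is essentially the paper's own approach: the theorem is imported from \cite{Samary:2012bw}, and the proof sketched there is precisely a multi-scale slice decomposition, a power-counting degree built from $L$, $F$ and the rank $R$ of the line--face incidence matrix $\epsilon_{lf}$, the resulting table of primitively divergent graphs ($N=2,4$ and, for $\vp^6_5$, also $N=6$, with the boundary-graph/face-connectedness analysis that produces $V_{4,2}$ alongside $V^d_{4,1}$, $V_{6,1}$, $V_{6,2}$ and the mass and wave-function terms), followed by locality-based subtractions. Just watch two small points of bookkeeping: the trade-off you describe (each sliced propagator $M^{-2i}$, each independent face $M^{+i}$, $R$ fewer independent faces) yields a degree $-2L+F-R$, so divergence means $-2L+F-R\ge 0$ rather than the sign of the formula as you quote it, and the momentum-conservation deltas encoded in $\epsilon_{lf}$ are attached to the internal lines (propagators), not to the vertices.
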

The proof of this statement rests on a 
power counting theorem which can be summarized by the following table giving the list of primitively divergent graphs (for precisions and notations, see \cite{Samary:2012bw}):
 \begin{eqnarray}
  \begin{array}{|c|c|c|c|c||c|}
\hline
    &N&\omega(\cG)&\omega(\partial\cG)&C_{\partial\cG}-1&\omega_{d}(\cG)\\
    \hline
    \vp^{4}_{6}&4&0&0&0&0\\
    &2&0&0&0&2\\
    \hline
    \vp^{6}_{5}&6&0&0&0&0\\
    &4&0&0&0&1\\
    &4&0&0&1&0\\
    &2&0&0&0&2\\
 &2&0&0&0&1\\
    \hline
    \hline
  \end{array}
\\
\mbox{ Table: Divergent graphs of models \ref{eq:Action4} and \ref{eq:Action6}}\nonumber
\end{eqnarray}

 with $N$  the number of external fields, $\omega(\cG)$ the degree of the colored extension of the graph $\cG$, $\omega(\partial\cG)$ the degree
of the colored extension of the boundary $\partial\cG$  of the graph $\cG$, 
$C_{\partial\cG}$ the number of connected component of the boundary graph 
$\partial \cG$.

Using this table, we are now in position to compute renormalized coupling equations.


\section{One-loop $\beta$-function of $\vp_6^4$-model}
This section is devoted to the  one-loop evaluation 
of the $\beta$-function of $\vp_6^4$. To proceed, we enlarge the space of coupling constants so that (\ref{eq:Action4})  becomes
\bea
 \mathfrak{S}_{4}[\vpb,\vp]&&\sum_{p_1,\cdots p_6}\vpb_{654321}\,\delta(\sum_{i}^6p_{i})(p^{2}+m^{2})\,\vp_{123456}+
\textstyle{\frac12}\sum_{\rho=1}^6 \lambda_{4,1;\rho}\,V^6_{4,1;\rho}.
\eea
Only at the end we will perform a merging of all coupling at the same value $\lambda_{4,1;\rho}=\lambda_{4,1}$.  Thus by introducing a distinction between
the colors, $\rho=1,2,\dots,6$, the combinatorics becomes less involved. 

We have the following theorem:
\begin{theorem}\label{th1}
At one-loop, the renormalized coupling constant associated with $\lambda_{4}$ is given by
\bea
\lambda_{4}^{\rm ren}=\lambda_{4}+\frac{19\pi^2}{5\sqrt{5}}\lambda_{4}^2\,\mathcal{I} +O(\lambda_{4}^2),\quad\mbox{with }\quad \mathcal{I}=\int_0^\infty\frac{e^{-\alpha m^2}}{\alpha}\,d\alpha
\eea
such that the $\beta$-function of the model with single wave-function renormalization and single coupling constant is given by
$\beta=-\frac{19\pi^2}{5\sqrt{5}}$.
\end{theorem}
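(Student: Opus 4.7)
The plan is to compute the 1PI 4-point and 2-point functions to one loop, extract the vertex counterterm $\delta\lambda_v$ and the wave-function renormalization $Z=1+\delta Z$, and assemble the renormalized coupling via $\lambda_4^{\rm ren}=Z^{-2}(\lambda_4+\delta\lambda_v)+O(\lambda_4^3)$. Following the author's own device, I would keep the six colour-refined couplings $\lambda_{4,1;\rho}$ distinct throughout the diagrammatic analysis and only identify $\lambda_{4,1;\rho}=\lambda_4$ at the very end; this isolates the bubble topologies by their colour pattern and makes the symmetry factors transparent.

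The first substantive step is to enumerate the one-loop 1PI graphs. By the power-counting table following Theorem~\ref{thm-PertRen}, the only log-divergent 4-point graphs are the ``bubbles'' obtained by joining two $V^{6}_{4,1}$ vertices by two color-$0$ propagators whose external strand pattern is again of $V^{6}_{4,1}$ type; on the 2-point side there is the unique melonic tadpole (together with a mass-renormalization piece which is irrelevant for $\beta$). For each retained graph I would write the amplitude in the Schwinger representation,
\begin{equation}
C([p])=\int_0^\infty e^{-\alpha(p^2+m^2)}\,d\alpha\;\delta_{\sum_i p_i,\,0},
\end{equation}
so that the graph becomes a product of $\alpha$-integrals times Gaussian sums $\sum_{p\in\mbZ}e^{-A p^2}\sim\sqrt{\pi/A}$ over the closed internal strand indices, while each of the two gauge deltas removes exactly one internal strand sum.

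The second step is the assembly. Extracting the log-divergent part amounts to checking, for each retained graph, that after all strand sums and the two delta constraints have been performed the remaining Schwinger integral reduces to $\int_0^\infty e^{-\alpha m^2}\,d\alpha/\alpha=\mathcal{I}$ with a rational/transcendental prefactor. Summing the same-colour and mixed-colour vertex pairings with their symmetry and colour-permutation weights yields $\delta\lambda_v$; the analogous computation on the melonic tadpole yields $\delta Z$. The combination $\delta\lambda_v-2\lambda_4\,\delta Z$ is the net shift of $\lambda_4^{\rm ren}$ at this order, and the theorem is the assertion that, after collapsing $\lambda_{4,1;\rho}\to\lambda_4$, it evaluates to $\tfrac{19\pi^{2}}{5\sqrt{5}}\,\lambda_4^2\,\mathcal{I}$; the $\beta$-function $\beta=-\tfrac{19\pi^{2}}{5\sqrt{5}}$ is then read off.

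The main obstacle is the combinatorial bookkeeping, on two fronts. First, for each bubble one must count exactly how many internal strands close in the UV regime: a miscount turns a logarithmic divergence into a spurious power law (or into a finite piece) and corrupts the coefficient. Second, the peculiar $\sqrt 5$ in the denominator is what signals that precisely five Gaussian strand sums survive the two gauge deltas on the internal lines, and these must combine with the Schwinger measure so that the leftover $\alpha$-power is exactly $-1$ and reproduces $\mathcal{I}$. Getting this delicate balance right, together with the sign and weight of the wave-function subtraction so that the answer is $19/5$ rather than some nearby fraction, is where essentially all the work lies; the remainder is routine enumeration of $V^{6}_{4,1;\rho}$ vertex permutations and symmetry factors.
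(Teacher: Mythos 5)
Your plan reproduces the paper's strategy almost verbatim: colour-refined couplings $\lambda_{4,1;\rho}$ collapsed only at the end, the unique melonic bubble for $\Gamma_4$ and the six tadpoles for $\Sigma$, Schwinger parametrisation with Gaussian strand sums, and the assembly $\lambda_4^{\rm ren}=Z^{-2}(\lambda_4+\delta\lambda_v)$, which is the paper's $\lambda_4^{\rm ren}=-\Gamma_4(0)/Z^2$ expanded to this order. But as a proof it has a genuine gap: everything that actually produces the number $\tfrac{19\pi^2}{5\sqrt5}$ and the sign of $\beta$ is deferred. You never extract the wave-function renormalization concretely. The point is not merely ``the tadpole yields $\delta Z$'': one must keep the external strand momentum $b_\rho$ inside the tadpole sum (the terms $2b\sum_k p_k+2b^2$ in the denominator), take two derivatives at $b=0$, and find that two distinct divergent sums appear, $\mS'(0)=\tfrac{\pi^2}{\sqrt5}\mathcal{I}$ and $\mathcal{K}(0)=\tfrac{\pi^2}{5\sqrt5}\mathcal{I}$, combining as $4\big(\mS'(0)-2\mathcal{K}(0)\big)=\tfrac{12\pi^2}{5\sqrt5}\mathcal{I}$; together with the bubble value $\lambda_4^2\,\mS'(0)$ and the symmetry factors ($K_{T_\rho}=2$ against the $\tfrac12$ in the action, $K_{\cG}=2^3$ against $\tfrac{1}{2^2\cdot2!}$) this is what gives $\tfrac{24}{5}-1=\tfrac{19}{5}$. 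Your proposal explicitly acknowledges that this bookkeeping ``is where essentially all the work lies'' and leaves it undone, so the theorem is asserted rather than proved.

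A smaller but substantive inaccuracy in the sketch: the $\sqrt5$ does not signal ``five Gaussian strand sums surviving the two gauge deltas.'' In the tadpole there is a single internal line, hence a single delta; four strand sums survive, and the $\sqrt5$ is the determinant of the coupled quadratic form $\sum_{k=1}^4p_k^2+\big(\sum_{k=1}^4p_k\big)^2$ (the matrix $\mathbb{I}+J$ with determinant $5$), evaluated via $\sum_{p\in\mbZ}e^{-\alpha p^2}\sim\sqrt{\pi/\alpha}$ after successive completion of squares, as in Appendix~I. If you were to count ``five surviving Gaussian sums'' you would get the wrong power of $\alpha$ in the Schwinger integral and lose the logarithmic divergence $\mathcal{I}$ altogether, so this heuristic needs to be replaced by the actual strand/face count before the computation can go through.
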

We now prove   Theorem \ref{th1}.
Let $Z$  be the wave function  renormalization which writes:
\bea\label{eq:zdep}
Z=1-\frac{\partial^2}{\partial b_\rho^2}\Sigma\Big|_{[b]=0}, \qquad \rho=1,2,\cdots,6, 
\eea
where  $\Sigma$ is called the  self-energy or the sum of all amputated
one-particle irreducible (1PI) two-point functions which must  be evaluated at one-loop. 
The derivative on $\Sigma$ is with respect to an external 
argument. 
The $\beta$-function of the model $\vp^4_6$ is encoded by 
the following quotient
\bea
\lambda^{\rm ren}_{4} = - \frac{\Gamma_{4}(0)}{Z^2}
\eea
where $\Gamma_{4}$ is the sum of all amputated 1PI four-point functions
computated at one-loop and at low external momenta
that we symbolize by a unique argument $(0)$. 
\medskip 

\noindent{\bf Self-energy and wave function renormalization.}
 Having a look on (\ref{eq:zdep}) only is relevant the dependance
in some color $\rho$ of $\Sigma$. 
We will evaluate only this part in the self-energy at one-loop. 
For two sets of external arguments $[b]$ and 
$[b']$, one has
\bea\label{topointsnew}
\Sigma([b],[b'])=<\bar\vp_{[b]}\vp_{[b']}>^t_{1PI}=\sum_{\cG}K_{\cG}\,A_{\cG}([b],[b'])
\eea
where $K_\cG$ is a combinatorial factor and $A_{
\cG}$ is the amplitude  of the graph $\cG$. 
Let 
\bea
\mS(b)=\sum_{p_1,\cdots,p_4}\big[\big(\sum_{k=1}^4 p_k^2\big)+\big(\sum_{k=1}^4 p_k\big)^2+2b\sum_{k=1}^4 p_k+2b^2+m^2\big]^{-1}
\eea
\bea
\mS'(b)=\sum_{p_1,\cdots,p_4}\big[\big(\sum_{k=1}^4 p_k^2\big)+\big(\sum_{k=1}^4 p_k\big)^2+2b\sum_{k=1}^4 p_k+2b^2+m^2\big]^{-2}
\eea
\bea
\mathcal{K}(b)=\sum_{p_1,\cdots,p_4}\frac{\big(\sum_{k=1}^4 p_k+2b\big)^2}{\big[\sum_{k=1}^4 p_k^2+\big(\sum_{k=1}^4 p_k\big)^2+2b\sum_{k=1}^4 p_k+2b^2+m^2\big]^3}.
\eea 

 At one-loop, there exist six tadpole graphs $T_\rho,\,\rho=1,\cdots, 6,$  that contribute to
the relation  (\ref{topointsnew}). For instance $T_1$ is represented in  \cref{fig:Tadpole}.
\begin{figure}[htbp]
{{\label{ClosedEx}}\includegraphics[scale=0.5]{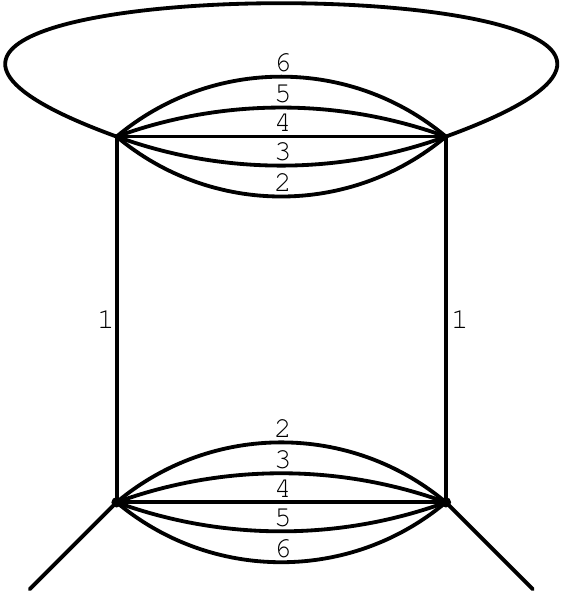}}
  \caption{Tadpole  graphs $T_1$}
  \label{fig:Tadpole}
\end{figure}
The amplitude associated to the tadpole $T_\rho$ is given by
\bea
A_{T_\rho}
&=&-\frac{\lambda_{4,\rho}}{2}\mS(b_\rho).
\eea
The combinatorial weight of these  graphs $T_\rho$ is $K_{T_\rho}=2$. Then (\ref{topointsnew}) is re-expressed as
\bea
 \Sigma([b])=-\sum_{\rho=1}^6\lambda_{4,\rho}\mS(b_\rho).
\eea
We have the following relation (see Appendix I for details):
\bea\label{zoulou}
\mS'(0)=\frac{\pi^2}{\sqrt{5}}\mathcal{I},\quad \mathcal{K}(0)=\frac{\pi^2}{5\sqrt{5}}\mathcal{I},\quad \mathcal{I}=\int_0^\infty\,d\alpha \frac{e^{-\alpha m^2}}{\alpha},
\eea
then
\bea
\frac{\partial^2 \Sigma[b]}{\partial b_\rho^2}\Big|_{[b]=0}=4\lambda_{4,\rho}\Big(\mS'(b_\rho)-2
\mathcal{K}(b_\rho)\Big)\Big|_{[b]=0}=\frac{12\pi^2}{5\sqrt{5}}\lambda_{4,\rho}\mathcal{I}.
\eea
Using the fact that the tadpole amplitudes   are symmetric  with respect to the external variables, we  reduce all coupling constants to the same value i.e. $\lambda_{4,\rho}=\lambda_4,$ and get  
 the wave function renormalization  as
\bea
Z=1-\frac{12\pi^2}{5\sqrt{5}}\lambda_{4}\,\,\mathcal{I} +O(\lambda_4^2).
\eea

\noindent{\bf Four-point functions.}
The 1PI four-point function amplitudes $\Gamma_{4,\rho},\, \rho=1,2,\cdots, 6,$ are given by
\bea\label{Four-point}
\Gamma_{4,\rho}([b], [b'])=<\bar\vp_{[b]_1}\vp_{[b]_2}\bar\vp_{[b']_1}
\vp_{[b']_2}>^t_{1PI}=\sum_{\cG}K_{\cG} A_{\cG}([b], [b']),
\eea
where $[b]_j,[b']_j,\, j=1,2,$ are the external strand indices.
Using the cyclic permutation over the six indices $\rho$, the  four-point functions  are explicitly given by
\bea
\Gamma_{4,1}(b_1,\cdots, b_6,b_1'\cdots,b_6')=<\bar\vp_{123456}\vp_{65'4'3'2'1'}\bar\vp_{1'2'3'4'5'6'}\vp_{6'54321}>^t_{1PI}\\
\Gamma_{4,2}(b_1,\cdots, b_6,b_1'\cdots,b_6')=<\bar\vp_{123456}\vp_{6'54'3'2'1'}\bar\vp_{1'2'3'4'5'6'}\vp_{65'4321}>^t_{1PI}\\
\Gamma_{4,3}(b_1,\cdots, b_6,b_1'\cdots,b_6')=<\bar\vp_{123456}\vp_{6'5'43'2'1'}\bar\vp_{1'2'3'4'5'6'}\vp_{654'321}>^t_{1PI}\\
\Gamma_{4,4}(b_1,\cdots, b_6,b_1'\cdots,b_6')=<\bar\vp_{123456}\vp_{6'5'4'32'1'}\bar\vp_{1'2'3'4'5'6'}\vp_{6543'21}>^t_{1PI}\\
\Gamma_{4,5}(b_1,\cdots, b_6,b_1'\cdots,b_6')=<\bar\vp_{123456}\vp_{6'5'4'3'21'}\bar\vp_{1'2'3'4'5'6'}\vp_{65432'1}>^t_{1PI}\\
\Gamma_{4,6}(b_1,\cdots, b_6,b_1'\cdots,b_6')=<\bar\vp_{123456}\vp_{6'5'4'3'2'1}\bar\vp_{1'2'3'4'5'6'}\vp_{654321'}>^t_{1PI}
\eea
\begin{figure}[htbp]
{{\label{ClosedEx}}\includegraphics[scale=0.3]{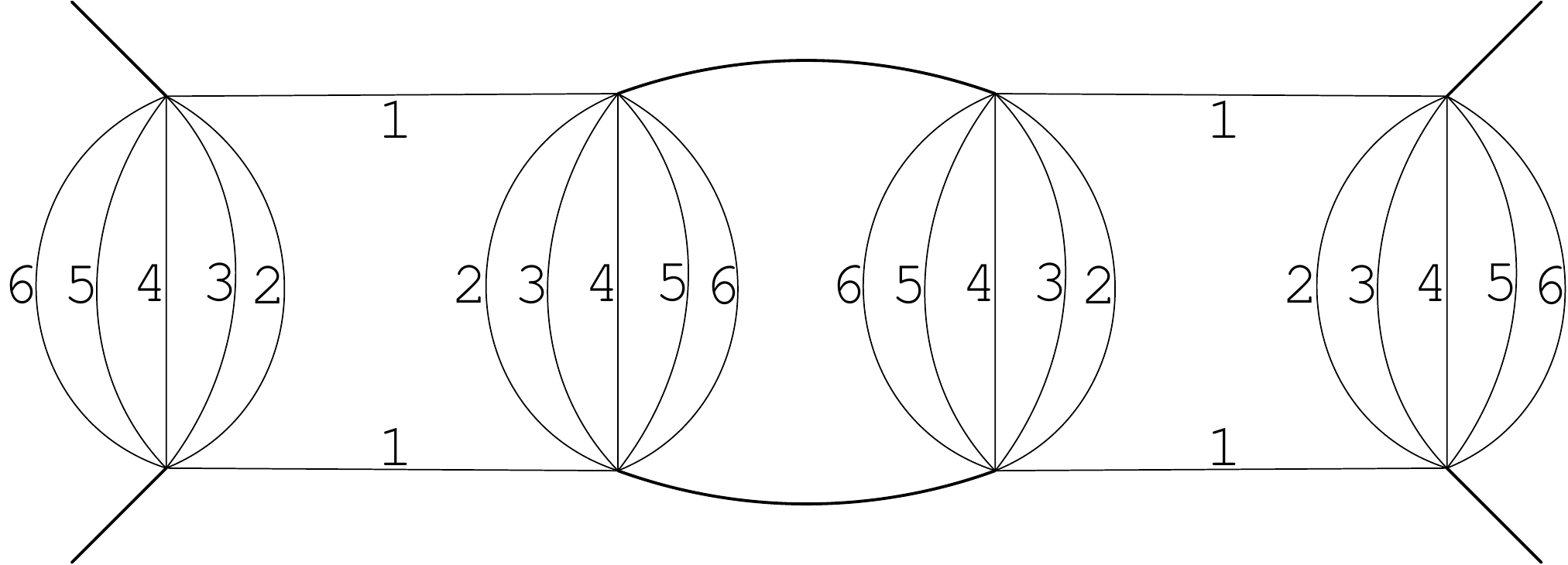}}\hspace{2cm}
  \caption{The melonic one-loop four-point  graph}
  \label{fig:fourpoint}
\end{figure}
At one-loop,  there is a unique graph contributing to $\Gamma_{4;\rho}$. It is of the form given by \cref{fig:fourpoint}.
The combinatorial factor of this graph is always  $K_{\cG}=2\cdot 2\cdot 2$. The amplitude associated of this graph
 is
\bea
A_{\cG,\epsilon}(b)=\frac{\lambda_{4,\epsilon}^2}{2^2.2}\mS'(b).
\eea
We obtain
\bea
\Gamma_{4}(0)&=&-\lambda_{4}+
\lambda_{4}^2\mS'(0)
+O(\lambda_4^2)=-\lambda_{4}
+\frac{\pi^2}{\sqrt{5}}\lambda_{4}^2\,\mathcal{I}+O(\lambda_4^2).
\eea
The renormalizable coupling constant is finally given by
\bea\label{crach}
\lambda^{{\rm ren}}_{4}&=&-\frac{\Gamma_{4}(0,0)}{Z^2}=\lambda_{4}+\frac{19\pi^2}{5\sqrt{5}}\lambda_{4}^2\,\mathcal{I} +O(\lambda_4^2).
\eea
This result shows that the $\vp_6^4$ model is asymptotically free in the UV regime. The $\beta$-function at one-loop of the model reads from (\ref{crach}): 
\bea
\beta=-\frac{19\pi^2}{5\sqrt{5}}.
\eea


\section{Two-loop $\beta$-functions of the $\vp_5^6$-model}
\begin{figure}[htbp]
$V_{6,1}${{\label{ClosedExttt}}
\includegraphics[scale=0.5]{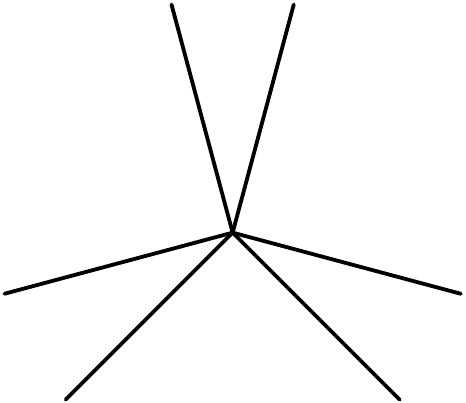}}\hspace{1.5cm}
$V_{6,2}${{\label{ClosedExttt}}
\includegraphics[scale=0.5]{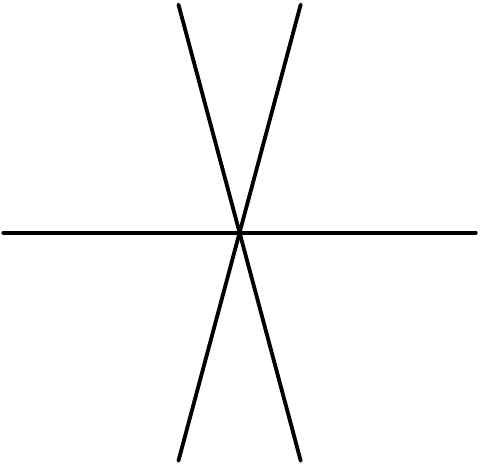}}\hspace{1.5cm}
  \caption{New graphical representation of vertices  $V_{6,1}$ and $V_{6,2}$}
  \label{fig:re-vertexe}
\end{figure}
In the $\vp_5^6$-model, there are two types of coupling constants and so we must evaluate two renormalized coupling equations.
In order to  compute the $\beta$-functions of the $\vp_5^6$ model it is important to note that the vertices of the type $V_{6,1}$ are parametrized by five indices $\rho=1,2,\cdots,5$,  and  the vertices contributing to $V_{6,2}$ are parametrized by  ten  indices  $\rho\rho'=1.2, 1.3,1.4, 1.5, 2.3,2.4,2.5, 3.4,3.5,4.5$.  The couple $\rho\rho'$ will be totally    symmetric i.e., $\rho\rho'=\rho'\rho$. For 
simplicity, the graphs of \cref{fig:re-vertexe}  represent henceforth the vertices of  $\vp_5^6$ model. For the same combinatorial reasons evoked above, we enlarge again the space of coupling and write (\ref{eq:Action6}) as
\bea
 \mathfrak{S}_{6}[\vpb,\vp]&=&\sum_{p_1,\cdots, p_5}\vpb_{54321}\,\delta(\sum_{i}^5p_{i})(p^{2}+m^{2})\,\vp_{12345}+\frac{1}{3}\sum_{\rho}\lambda_{6,1;\rho}V_{6,1;\rho}\cr
&+&\sum_{\rho\rho'}\lambda_{6,2;\rho}V_{6,2;\rho\rho'}+\frac{1}{2}\sum_{\rho}\lambda_{4,1;\rho}V_{4,1;\rho}+\frac{1}{2}\sum_{\rho}\lambda_{4,2}V_{4,2}.
\eea
We have the following theorem:
\begin{theorem}\label{prop3}
The  renormalized coupling 
 constants $\lambda_{6,1}^{\rm ren}$ and $\lambda_{6,2}^{\rm ren}$ satisfy the equations  
\bea
\lambda_{6,1}^{\rm ren}
&=&\lambda_{6,1}+\frac{9\pi^3}{4}\lambda_{6,1}^2\,\mathcal{I'}+12\big(\frac{49}{31\sqrt{31}}+\frac{5}{8}\big)\pi^3\lambda_{6,1}\lambda_{6,2}\mathcal{I'}
+ O(\lambda_{6,1}^p\lambda_{6,2}^{3-p}),
\eea
 and
\bea
\lambda_{6,2}^{\rm ren}
&=&\lambda_{6,2}+4\Big(\frac{178}{31\sqrt{31}}+\frac{11}{8}\Big)\pi^3\lambda_{6,2}^2\,\mathcal{I'}+\frac{11\pi^3}{4}\lambda_{6,1}\lambda_{6,2}\,\mathcal{I'}
+ O(\lambda_{6,1}^p\lambda_{6,2}^{3-p}),
\eea
$p=0,1,2,3$.
\end{theorem}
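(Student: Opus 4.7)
The plan is to follow the strategy of Theorem~\ref{th1}, systematically enlarged to accommodate the richer coupling structure of $\mathfrak{S}_6$. I would keep all couplings distinguished by color indices ($V_{6,1;\rho}$ for $\rho=1,\dots,5$ and $V_{6,2;\rho\rho'}$ for unordered pairs) throughout the calculation and only merge them at the end. The two targets are
\begin{equation}
\lambda_{6,i}^{\rm ren} = -\frac{\Gamma_{6,i}(0)}{Z^3}, \qquad i=1,2,
\end{equation}
so the first task is to compute, at the order relevant for $O(\lambda^2)$ corrections, the wave function renormalization $Z$ from the 1PI self-energy, and the 1PI amputated six-point functions $\Gamma_{6,1}$ and $\Gamma_{6,2}$ evaluated at vanishing external momenta.

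For the self-energy, the power-counting table singles out ``double-tadpole'' graphs on a single $V_{6,1}$ or $V_{6,2}$ vertex where four of the six legs are contracted pairwise into two internal lines. These produce a two-loop sum depending on a chosen external color $b_\rho$; differentiating twice in $b_\rho$ at $[b]=0$ as in (\ref{eq:zdep}) and evaluating an analogue of (\ref{zoulou}) with one extra free summation (the integral $\mathcal{I'}$ appearing in the statement) yields the coefficient of $\lambda_{6,i}\mathcal{I'}$ in $Z$. The $V_{4,i}$ tadpoles also contribute to $Z$, but being multiplied by $\lambda_{6}^{2}$ in $Z^{-3}$ they appear only at order $O(\lambda_4\lambda_6^2)$ in $\lambda_6^{\rm ren}$, and are therefore absorbed in the remainder $O(\lambda_{6,1}^p\lambda_{6,2}^{3-p})$.

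For the six-point sector I would enumerate the unique one-loop topology at this order: a bubble of two six-valent vertices joined by two $0$-color lines, with three channels $(V_{6,1},V_{6,1})$, $(V_{6,1},V_{6,2})$ and $(V_{6,2},V_{6,2})$. For each channel, a combinatorial count based on the graphical representation of \cref{fig:re-vertexe} and the permutations hidden in the definitions (\ref{eq:V61})--(\ref{eq:V62}) determines how many inequivalent contractions produce an external structure of type $V_{6,1}$ versus $V_{6,2}$. The amplitude of each bubble reduces to an $\mathcal{S}'$-type sum over four free momenta under a delta-function constraint on five indices, which by an Appendix~I-type Schwinger-parameter computation produces integrals of the form $(\pi^3/\sqrt{31})\,\mathcal{I'}$, while the companion rational coefficients $5/8$ and $11/8$ arise from a $\mathcal{K}$-type analogue. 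Substituting into $\lambda_{6,i}^{\rm ren} = -\Gamma_{6,i}(0)/Z^3$, expanding to second order in the couplings, and merging $\lambda_{6,i;\rho}\to\lambda_{6,i}$ by symmetry then yields the stated formulas.

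The main obstacle is the combinatorial bookkeeping of the six-point bubble channels: one must avoid double counting while reproducing the exact pattern of the statement, in particular the absence of a pure $\lambda_{6,2}^2$ term in $\lambda_{6,1}^{\rm ren}$ and of a pure $\lambda_{6,1}^2$ term in $\lambda_{6,2}^{\rm ren}$, which reflects a non-trivial selection rule on how the $V_{6,2}\,V_{6,2}$ channel projects onto $V_{6,1}$-type external data. A secondary but still delicate step is the new Schwinger-parameter integral responsible for the $1/\sqrt{31}$ factor, together with its $\mathcal{K}$-type counterpart, in an ambient space with more free summations than in Section~3.
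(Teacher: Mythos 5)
Your overall route is the same as the paper's: enlarge the couplings by colour indices, obtain $Z$ from the single-vertex tadpoles built on $V_{6,1}$ and $V_{6,2}$, compute the 1PI six-point functions channel by channel, evaluate the divergent sums by Schwinger parameters (the analogues of $S^3=\frac{\pi^3}{4}\mathcal{I}'$ and $S^{13}=\frac{\pi^3}{\sqrt{31}}\mathcal{I}'$), and finally use $\lambda_{6,i}^{\rm ren}=-\Gamma_{6,i}(0)/Z^3$ before merging $\lambda_{6,i;\rho}\to\lambda_{6,i}$. But as a proof plan it has two genuine gaps. First, the topology of the six-point graphs is misstated: two six-valent vertices with six external legs must be joined by \emph{three} $0$-colour lines (this is why the paper's $S^3$ and $S^{13}$ contain three propagator factors summed over six momenta, and why the section speaks of two loops); a bubble with only two internal lines is an eight-point function, so the enumeration you propose would start from the wrong set of graphs. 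Second, the ``non-trivial selection rule'' you leave as an unresolved obstacle is exactly the missing idea the paper supplies: the face-connectedness criterion. Among the divergent six-point contractions only the face-connected graphs $G_{1;\rho}$, $G_{2;\rho\rho'}$, $G'_{2;\rho\rho'}$, $G_{3;\rho\rho'}$ renormalize the couplings, while face-disconnected but still divergent graphs such as $G''_{2;\rho\rho'}$ must be discarded; without this criterion a naive count would add, e.g., the contribution $3\lambda_{6,1;\rho}\big[\sum_{\rho'\neq\rho}\lambda_{6,2;\rho\rho'}\big]S^3$ to $\Gamma_{6,1;\rho}$ and the coefficients of the theorem would not come out. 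Simply demanding ``no double counting'' does not produce this rule.

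Two smaller inaccuracies: the rational parts $5/8$ and $11/8$ do not come from a $\mathcal{K}$-type numerator in the six-point bubbles (at zero external momenta none is needed there); they arise from combining the expansion of $Z^{-3}$ --- whose $\lambda_{6,1}$ and $\lambda_{6,2}$ coefficients do contain the $\mathcal{K}$-type sums through $\Omega_{6,1}$ and $\Omega_{6,2}$ --- with the $S^3$ pieces of $\Gamma_{6,i}$. And the $V_{4,i}$ tadpoles in $Z$ would feed into $\lambda_{6}^{\rm ren}$ at order $\lambda_4\lambda_6$, not $\lambda_4\lambda_6^{2}$; dropping them matches what the paper implicitly does, but your order-counting justification is off by one power.
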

\noindent{\bf Self-energy and wave function renormalization.}
The following proposition holds:
\begin{proposition}
 The  wave function renormalization of the model is 
given by
\bea
Z&=&1-\frac{5\pi^3}{4}\lambda_{6,1}\,\mathcal{I'}-4\big(\frac{80}{31\sqrt{31}}+\frac{5}{8}\big)\pi^3\lambda_{6,2}\,\mathcal{I'}+O(\lambda_{6,1}^p\lambda_{6,2}^{2-p}),
\eea
$p=0,1,2$, and where $\mathcal{I'}$ writes
\bea
\mathcal{I'}=\int_0^\infty\,\int_0^\infty\,d^2\alpha\frac{e^{-2\alpha m^2}}{\alpha^2}.
\eea
\end{proposition}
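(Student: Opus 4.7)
The plan is to extend the one-loop self-energy calculation of Section~3 to the sextic interactions of the $\vp_5^6$ model. Because each of the vertices $V_{6,1}$ and $V_{6,2}$ carries six half-lines, the leading contribution of each to the 1PI two-point function is obtained by pairing four of these half-lines into two internal color-$0$ propagators, which produces a two-loop \emph{double-tadpole}. The presence of two internal propagators explains the double $\alpha$-integral in $\mathcal{I}'$ and fixes the perturbative order at which one has to work. In parallel with (\ref{eq:zdep}), one sets $Z=1-\partial^2_{b_\rho}\Sigma\vert_{[b]=0}$ and retains only those contributions that survive two derivatives in the external argument.

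First I would enumerate the relevant 1PI double-tadpole graphs. For each color $\rho\in\{1,\dots,5\}$, the vertex $V_{6,1;\rho}$ generates a family of graphs obtained by pairing its six half-lines into one external pair and two internal loops; the stranded representation of \cref{fig:re-vertexe} allows both the combinatorial weight $K_{\cG}$ and the face content of each graph to be read off directly. The same analysis applies to $V_{6,2;\rho\rho'}$, whose more entangled strand pattern produces contractions that split into \emph{factorized} double-tadpoles (two disjoint loops with independent quadratic forms) and \emph{nested} double-tadpoles (two loops sharing constraints through common strands). I would tabulate for each vertex, in parallel with the $T_\rho$ table of Section~3, the amplitude and multiplicity of every such graph.

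Next, each amplitude is written as a double lattice sum over the internal momenta; Schwinger parameters are introduced for the two propagators, and a Gaussian integration in the residual momenta reduces the expression to two-loop analogues of the $\mS,\mS',\mathcal{K}$ integrals. At vanishing external argument, the quadratic form in the loop momenta has rational determinant in the factorized cases and determinant $31$ in the nested cases, so that the appendix promised after (\ref{zoulou}) supplies these identities in strict parallel with the one-loop evaluation $\mS'(0)=\pi^2\mathcal{I}/\sqrt{5}$. Applying $\partial^2/\partial b_\rho^2$, evaluating at $[b]=0$, summing over colors, and finally identifying $\lambda_{6,1;\rho}=\lambda_{6,1}$ and $\lambda_{6,2;\rho\rho'}=\lambda_{6,2}$ should reproduce the announced coefficients $5\pi^3/4$ for $\lambda_{6,1}$ and $4\bigl(80/(31\sqrt{31})+5/8\bigr)\pi^3$ for $\lambda_{6,2}$.

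The main obstacle will be the combinatorial bookkeeping for $V_{6,2;\rho\rho'}$: because its two loops can be intertwined through shared strands, care is needed to distinguish factorized from nested contractions, to avoid double counting under the symmetry $\rho\rho'=\rho'\rho$, and to discard the pairings that produce reducible or vacuum subgraphs rather than genuine 1PI two-point contributions. A second, more technical, point is verifying that the quadratic dependence on $b_\rho$ extracts precisely the two-loop analogue of the combination $\mS'-2\mathcal{K}$ appearing in Section~3, since this is what guarantees that the final answer is linear in the single integral $\mathcal{I}'$ rather than involving a more general two-parameter integral.
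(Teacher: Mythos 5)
Your plan coincides with the paper's own proof: the paper evaluates exactly the double-tadpole graphs you describe ($T_{1;\rho}$ from $V_{6,1}$, and the factorized/nested contractions $T_{3;\rho\rho'}$, $T_{2;\rho\rho'}$ from $V_{6,2}$), writes their amplitudes as double momentum sums ($\mS^1$, $\mS^{13}$, $\mS^{12}$), evaluates them in Appendix II via Schwinger parameters and Gaussian sums (with the determinant $31$ appearing precisely in the nested case), then takes $\partial^2_{b_1}$ of the self-energy, sums over colors, and merges the couplings. Your anticipated combinatorial and analytic subtleties are exactly the points the paper handles, so the approach is essentially identical.
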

{\bf proof:} Let us consider the following series

\bea
\mS^1(b)&=&\sum_{\stackrel{p_1,p_2,p_3}{q_1,q_2,q_3}}\Big\{\big[\sum_{k=1}^3 p_k^2+(\sum_{k=1}^3 p_k)^2+2b\sum_{k=1}^3 p_k+2b^2+m^2\big]^{-1}\cr
&&\times \big[\sum_{k=1}^3 q_k^2+(\sum_{k=1}^3 q_k)^2+2b\sum_{k=1}^3 q_k+2b^2+m^2\big]^{-1}\Big\},
\eea
\bea
\mS^{12}(b)&=&\sum_{\stackrel{p_1,p_2,p_3,p_4}{q_1,q_2}}\Big\{\big(\sum_{k=1}^4 p_k^2+(\sum_{k=1}^4 p_k)^2+m^2\big)^{-1}\big(\sum_{k=1}^2 q_k^2+(\sum_{k=1}^2 q_k)^2+2b\sum_{k=1}^2 q_k\cr
&+&2(\sum_{k=1}^4 p_k)^2-2b\sum_{k=1}^4 p_k-2\sum_{k=1}^4 p_k\sum_{k=1}^2 q_k+2b^2+m^2\big)^{-1}\Big\},
\eea
and
\bea
\mS^{13}(b,b')&=&\sum_{\stackrel{p_1,p_2,p_3}{q_1,q_2,q_3}}\Big\{\big[\sum_{k=1}^3 p_k^2+(\sum_{k=1}^3 p_k)^2+2b\sum_{k=1}^3 p_k+2b^2+m^2\big]^{-1}\cr
&&\times \big[\sum_{k=1}^3 q_k^2+(\sum_{k=1}^3 q_k)^2+2b'\sum_{k=1}^3 q_k+2b'^2+m^2\big]^{-1}\Big\}.
\eea

The graphs contributing to the self-energy are of the form  listed in  \cref{fig:Tadpolesee}.
\begin{figure}[htbp]
$T_{1;\rho}${{\label{ClosedExttt}}
\includegraphics[scale=0.5]{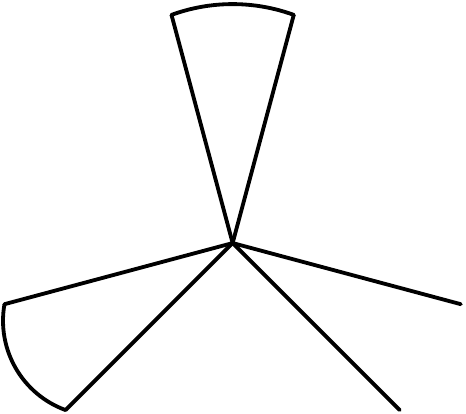}}\hspace{0.5cm}
$T_{2;\rho\rho'}$ {{\label{OpenExttt}}
\includegraphics[scale=0.5]{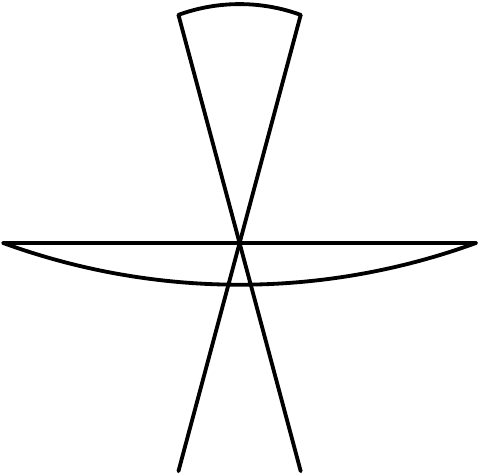}} \hspace{0.5cm}
$T_{3;\rho\rho'}$ {{\label{OpenExttt}}
\includegraphics[scale=0.5]{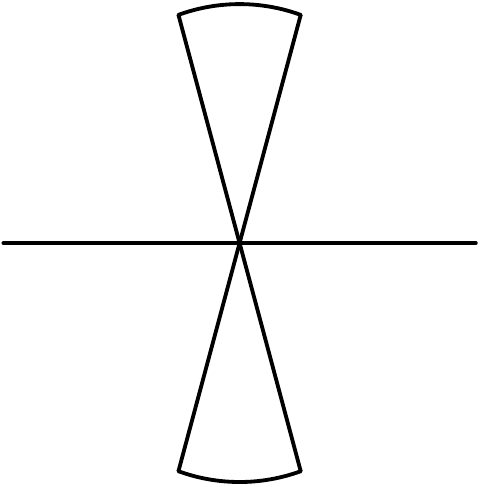}}\hspace{1.9cm}
  \caption{Divergent tadpoles graphs of $\vp_5^6$-model}
  \label{fig:Tadpolesee}
\end{figure}
 The amplitude corresponding to the tadpoles graphs $T_{1;\rho}$ is given by the following relation
\bea
A_{T_{1;\rho}}(b_\rho)&=&-\frac{\lambda_{6,1;\rho}}{3}\mS^1(b_\rho).
\eea
In the above expression $b_\rho$ is  an   external strand index.
Using  the combinatorial number associated to the tadpole graph $T_{1;\rho}$ given by $K_{T_{1;\rho}}=3$, the sum of  1PI two-point functions  are given by
\bea\label{topoints}
\Omega_{6,1}(b_\rho)=3A_{T_{1;\rho}}(b_\rho).
\eea
Similarly, the amplitude corresponding to the tadpole graphs $T_{2;1\rho}$ and $T_{3;1\rho}$ are respectively given by  relations
\bea
A_{T_{2;1\rho}}(b_{1})&=&-\lambda_{6,2;1\rho}\,\mS^{12}(b_1),
\eea
and
\bea
A_{T_{3;1\rho'}}(b_1,b_{\rho})&=&-\lambda_{6,2;1\rho}\,\mS^{13}(b_1,b_{\rho}).
\eea
The combinatorial factors are  $K_{T_{2;1\rho}}=1$ and $K_{T_{3;1\rho}}=1$. Therefore the  sum of these contribution yields
\bea\label{totopoints}
\Omega_{6,2}(b_1,b_{\rho})=A_{T_{2;1\rho}}(b_1)+A_{T_{3,1\rho}}(b_1,b_{\rho}).
\eea
 Combining the relations (\ref{topoints}) and (\ref{totopoints}),  we get a part of the self-energy involving the variable $b_1$ 
\bea
\Sigma_6(b_1,b_{\rho})=3A_{T_{1;1}}(b_1)+\sum_{\rho}\Big[A_{T_{2;1\rho}}(b_1)+A_{T_{3;1\rho}}(b_1,b_{\rho})\Big] + O(\lambda_{6,1}^p\lambda_{6,2}^{2-p}). 
\eea
 The wave function renormalization of the model is given by
\bea
&Z=1-\frac{\partial^2}{\partial b_1^2}\Sigma_6(b_1,b_{\rho})\big|_{b_1=b_{\rho}=0}.
\eea
Using  appendix \ref{serie2}, we have the following relations:
\bea
\frac{\partial^2}{\partial b_1^2}\Omega_{6,1}(b_1)\big|_{b_1=0}=\frac{5\pi^3}{4}\lambda_{6,1;1}\,\mathcal{I'},\quad \mathcal{I'}=\int_0^\infty\,\int_0^\infty\, d^2\alpha \,\frac{e^{-2\alpha m^2}}{\alpha^2}
\eea
\bea
\frac{\partial^2}{\partial b_1^2}\Omega_{6,2}(b_1,b_{\rho})\big|_{b_1=b_{\rho}=0}&=&
\big(\frac{80}{31\sqrt{31}}+\frac{5}{8}\big)\pi^3\lambda_{6,2,1\rho}\,\mathcal{I'}.
\eea
We restrict from now the coupling constants in each sector such that
$\lambda_{6,1;\rho} = \lambda_{6,1}$ and 
$\lambda_{6,2;\rho\rho'} = \lambda_{6,2}$ so that,
the wave function renormalization is 
\bea
Z&=&1-\frac{5\pi^3}{4}\lambda_{6,1}\,\mathcal{I'}-4\big(\frac{80}{31\sqrt{31}}+\frac{5}{8}\big)\pi^3\lambda_{6,2}\,\mathcal{I'}+ O(\lambda_{6,1}^p\lambda_{6,2}^{2-p}),
\eea
$p=0,1,2$. $\square$

\noindent{\bf Six-point functions.}
 The initial calculation of the six-point functions shows that they
prolifer quickly \cite{BenGeloun:2012yk}. However, in the present
gauge invariant model which is more constrained, several of these
should be not renormalized because either are convergent (pay attention to the
fact that gauge invariant models are less divergent than the ordinary one) or turn out to violate the face-connectedness condition (see discussion below and \cref{fig:6-new2}) \cite{Carrozza2012aa, Samary:2012bw}. 

At the end, we will focus on the six-point functions which are face-connected graphs of type $V_{6,1}-V_{6,1}$ and  $V_{6,1}-V_{6,2}$, see \cref{fig:6-new}. This will be used for the calculation of the sum of 1PI six-point functions $\Gamma_{6,1;\rho}$ and $\Gamma_{6,2;\rho\rho'}$.  The renormalized coupling constant equations for $\lambda_{6,1;1}^{\rm ren}$ and $\lambda_{6,2;1\rho'}^{\rm ren}$ are defined by
\bea
\lambda_{6,1;\rho}^{\rm ren}=-\frac{\Gamma_{6,1;\rho}(0,0)}{Z^3},\quad \lambda_{6,2;\rho\rho'}^{\rm ren}=-\frac{\Gamma_{6,2;\rho\rho'}(0,0)}{Z^3}.
\eea

{\bf Proof of Theorem \ref{prop3}}

The first part of this proof is about the evaluation of  amplitudes of various graphs of \cref{fig:6-new}.
\begin{figure}[htbp]
$G_{1;\rho}${{\label{ClosedExttt}}\includegraphics[scale=0.4]{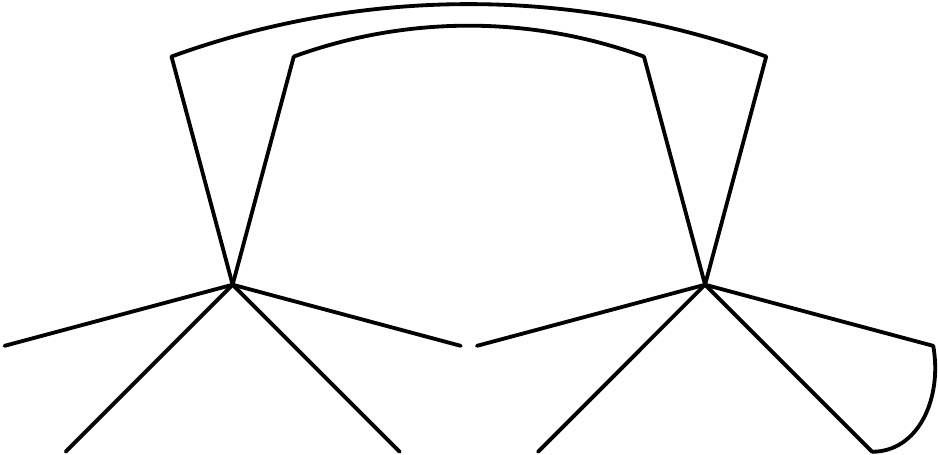}}\hspace{0.5cm},$\qquad$
$G_{2;\rho\rho'}${{\label{ClosedExttt}}\includegraphics[scale=0.4]{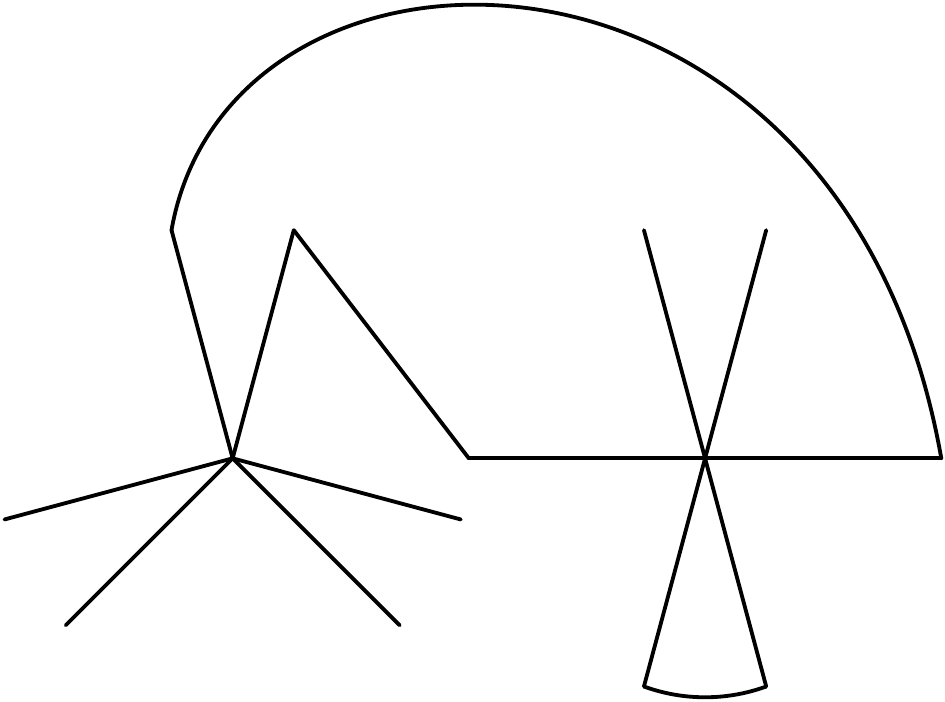}}\hspace{0.5cm}\\
$G'_{2;\rho\rho'}${{\label{ClosedExttt}}\includegraphics[scale=0.4]{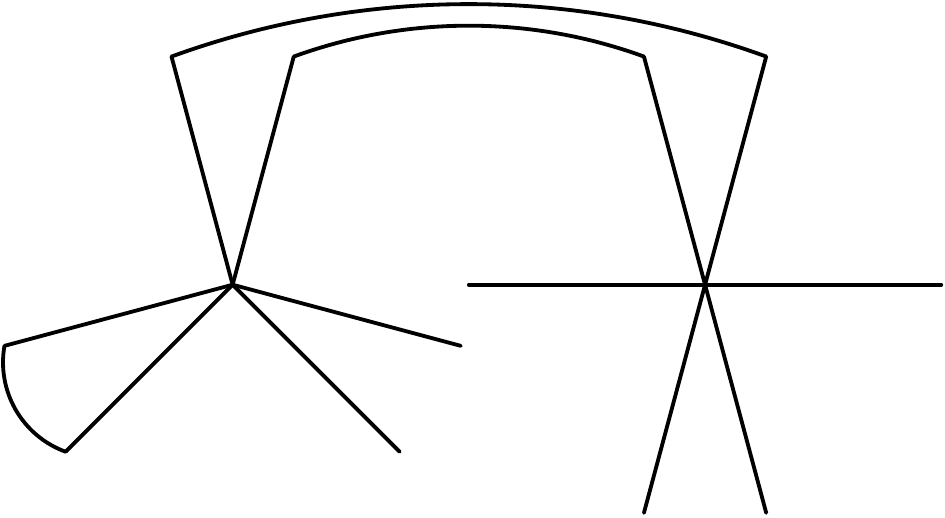}}\hspace{0.5cm},$\qquad$
$G_{3;\rho\rho'}$ {{\label{OpenExttt}}\includegraphics[scale=0.4]{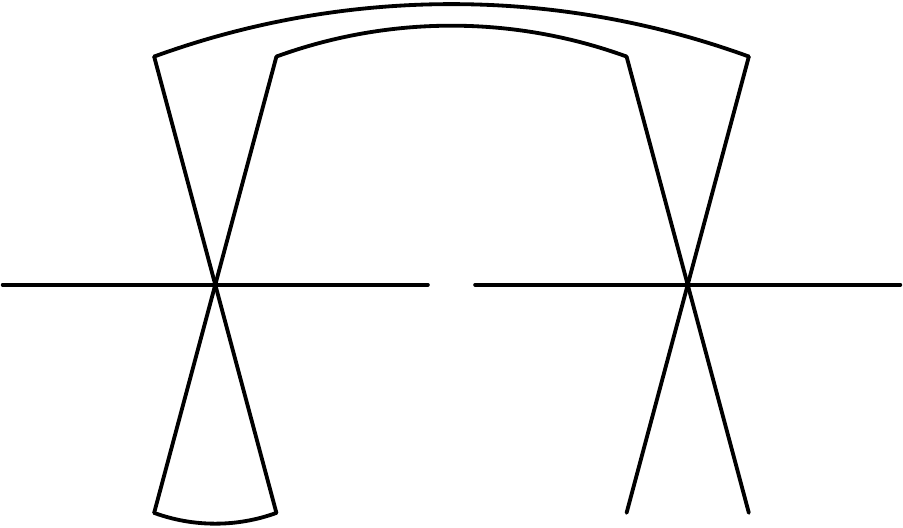}}\hspace{0.5cm}
  \caption{Face-connected divergent six-point graphs of $\vp_5^6$-model  }
  \label{fig:6-new}
\end{figure}
We introduce some formal sums:

\bea
&&
S^3 = \sum_{\stackrel{p_1,p_2,p_3}{q_1,q_2,q_3}}\big(\sum_{k=1}^3 p_k^2+(\sum_{k=1}^3 p_k)^2+m^2\big)^{-2}\big(\sum_{k=1}^3 q_k^2+(\sum_{k=1}^3 q_k)^2+m^2\big)^{-1} \crcr
&&
S^{13} =  \sum_{\stackrel{p_1,p_2,p_3}{q_1,q_2,q_3}}\big(\sum_{k=1}^3 p_k^2+(\sum_{k=1}^3 p_k)^2+m^2\big)^{-2}\big(\sum_{k=1}^3 q_k^2+(\sum_{k=1}^3 (p_k- q_k))^2 +(\sum_{k=1}^3p_k)^2 +m^2\big)^{-1} \crcr
&&
=   \sum_{\stackrel{p_1,p_2}{q_1,q_2,q_3,q_4}}\big(\sum_{k=1}^2 p_k^2+(\sum_{k=1}^4 q_k- \sum_{k=1}^2p_k)^2+(\sum_{k=1}^4 q_k)^2+m^2\big)^{-2}\big(\sum_{k=1}^4 q_k^2+(\sum_{k=1}^4 q_k)^2 +m^2\big)^{-1}.\cr\cr
&&
\eea

A calculation yields, at low external momenta,
\bea
&&
A_{G_{1;\rho}}(0,\dots,0) = \frac{\lambda^2_{6,1;\rho}}{3^2.2!} K_{G_{1;\rho}}S^3  = 3\cdot 2. \lambda^2_{6,1;\rho} S^3,
\\
&&
A_{G_{2;\rho\rho'}}(0,\dots,0) = \frac{1}{3}\lambda_{6,1;\rho}
\sum_{\rho'}\lambda_{6,2;\rho\rho'} K_{G_{2;\rho\rho'}}  S^{13} = 3 \lambda_{6,1;\rho}
[\sum_{\rho'\neq \rho }\lambda_{6,2;\rho\rho'}]  S^{13},\\
&&A_{G'_{2;\rho\rho'}}(0,\dots,0) = \frac{1}{3}(\lambda_{6,1;\rho}
+\lambda_{6,1;\rho'}) \lambda_{6,2;\rho\rho'}K_{G'_{2;\rho\rho'}}  S^3 =2 (\lambda_{6,1;\rho} +\lambda_{6,1;\rho'} ) \lambda_{6,2;\rho\rho'} S^3,\\
&&
A_{G_{3;\rho\rho'}}(0,\dots,0)  = \lambda_{6,2;\rho\rho'}
\big[ \sum_{\tilde\rho \neq \rho}\lambda_{6,2;\rho\tilde\rho} +
\sum_{\tilde\rho \neq \rho'}\lambda_{6,2;\rho'\tilde\rho} \big] (S^3 +S^{13}) ,\\
&&K_{G_{1;\rho}} = 3^3 \cdot 2^2 ,\qquad K_{G_{2;\rho\rho'}} = 3 \cdot 3,\qquad K_{G'_{2;\rho\rho'}}=3 \cdot 2 . 
\eea
The contributions to $\Gamma_{6,1;\rho}$ are obtained from 
$G_{1;\rho}$ and  $G_{2;\rho\rho'}$.  Using these, we get 
\bea
&&
\Gamma_{6,1;\rho}(0,\dots ,0) = -\lambda_{6,1;\rho} + 
\lambda_{6,1;\rho} \Big[  6 \lambda_{6,1;\rho} S^3 +  3
[\sum_{\rho'\neq \rho }\lambda_{6,2;\rho\rho'}]  S^{13}\Big]
+  O(\lambda_{6,1}^p\lambda_{6,2}^{3-p}) \, .
\eea
The contributions to $\Gamma_{6,2;\rho\rho'}$ are obtained from 
$G'_{2;\rho\rho'}$ and $G_{3;\rho\rho'}$. One finds
\bea
\Gamma_{6,2;\rho\rho'}(0,\dots ,0)& =&
-  \lambda_{6,2;\rho\rho'}+ 2 (\lambda_{6,1;\rho} +\lambda_{6,1;\rho'} ) \lambda_{6,2;\rho\rho'} S^3 \cr
&+&\lambda_{6,2;\rho\rho'}
\big[ \sum_{\tilde\rho \neq \rho}\lambda_{6,2;\rho\tilde\rho} +
\sum_{\tilde\rho \neq \rho'}\lambda_{6,2;\rho'\tilde\rho} \big] (S^3 +S^{13})+ O(\lambda_{6,1}^p\lambda_{6,2}^{3-p})
\eea
 Reducing to the smaller space of couplings $\lambda_{6,1;\rho}=\lambda_{6,1}$ and $\lambda_{6,2;\rho\rho'}=\lambda_{6,2}$,
we get
\bea
\Gamma_{6,1}(0,\dots ,0) & =&-\lambda_{6,1} + 6 
\lambda_{6,1}^2  S^3 +  12\lambda_{6,1} \lambda_{6,2}  S^{13} +
 O(\lambda_{6,1}^p\lambda_{6,2}^{3-p}),
\crcr
\Gamma_{6,2}(0,\dots ,0) &=&
-  \lambda_{6,2}+  8\lambda_{6;2}^2(S^3 +S^{13})+  4\lambda_{6,2} \lambda_{6,1} S^{3}
+ O(\lambda_{6,1}^p\lambda_{6,2}^{3-p}).
\eea
Asymptotically, we can obtain the relation 
\bea
S^3
=\frac{\pi^3}{4}\,\mathcal{I'},\quad 
S^{13}=\frac{\pi^3}{\sqrt{31}}\,\mathcal{I'}
\eea
 (see Appendix II for more detail). At one-loop
the renormalized coupling constant $\lambda_{6,1}^{\rm ren}$ and  $\lambda_{6,2}^{\rm ren}$ are given by
\bea\label{ren1}
\lambda_{6,1}^{\rm ren}
&=&\lambda_{6,1}+\frac{9\pi^3}{4}\lambda_{6,1}^2\,\mathcal{I'}+12\big(\frac{49}{31\sqrt{31}}+\frac{5}{8}\big)\pi^3\lambda_{6,1}\lambda_{6,2}\mathcal{I'}
+O(\lambda_{6,1}^{p}\lambda_{6,2}^{3-p}),
\eea
and
\bea\label{ren2}
\lambda_{6,2}^{\rm ren}
&=&\lambda_{6,2}+4\Big(\frac{178}{31\sqrt{31}}+\frac{11}{8}\Big)\pi^3\lambda_{6,2}^2\,\mathcal{I'}+\frac{11\pi^3}{4}\lambda_{6,1}\lambda_{6,2}\,\mathcal{I'}
+O(\lambda_{6,1}^{p}\lambda_{6,2}^{3-p}).
\eea
$\square$

 {\bf Discussion:}\,\,

$\bullet$  Let us come back on the subtle issue about
the notion of connectedness in this theory. The correct 
notion of connectedness should be the one of face-connectedness. 
Several graphs which a priori are divergent should not renormalize any coupling constant. For instance, graphs of the form given in \cref{fig:6-new2} are face-disconnected divergent six-point graphs. They do not contribute to the 1PI six-point functions. The amplitudes of the graphs are  

\bea
&&
A_{G''_{2;\rho\rho'}}(0,\dots,0) = \frac{1}{3}\lambda_{6,1;\rho}
\sum_{\rho'}\lambda_{6,2;\rho\rho'} K_{G''_{2;\rho\rho'}}  S^3 
= 3 \lambda_{6,1;\rho}
[\sum_{\rho'\neq \rho }\lambda_{6,2;\rho\rho'}] S^3 
\\
&& K_{G''_{2;\rho\rho'}} =3 \cdot 3 
\eea

\begin{figure}[htbp]
$G''_{2;\rho\rho'}${{\label{ClosedExttt}}\includegraphics[scale=0.4]{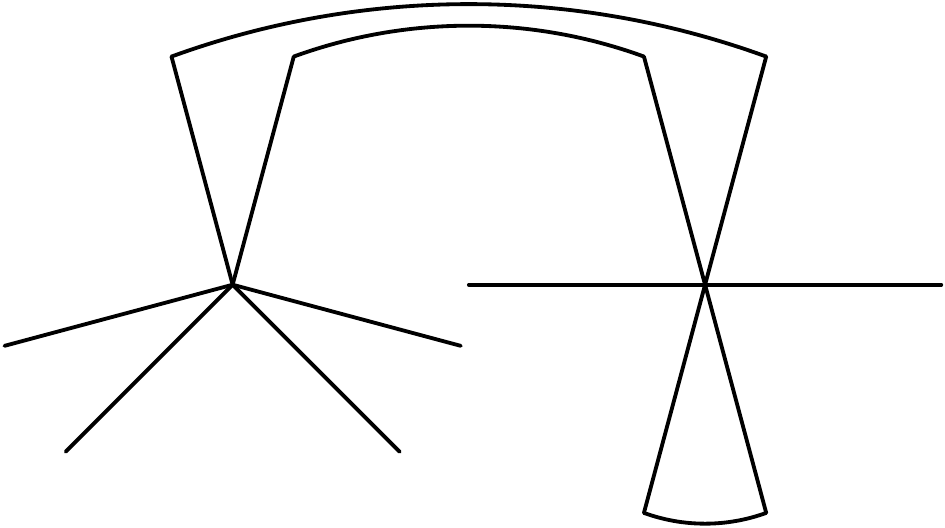}}\hspace{0.5cm}
  \caption{ Face-disconnected and divergent six-point graphs of $\vp_5^6$-model  }
  \label{fig:6-new2}
\end{figure} 
$\bullet$  We now discuss the results of Theorem \ref{prop3}.  Equation (\ref{ren1}) can be re-expressed  as  
\bea
\lambda_{6,1}^{\rm ren}=\lambda_{6,1}-\beta_{1}\lambda_{6,1}^2\,\mathcal{I'}-\beta_{12}\lambda_{6,1}\lambda_{6,2}\,\mathcal{I'}+O(\lambda_{6,1}^{p}\lambda_{6,2}^{3-p}),
\eea
where, at this order of perturbation, the $\beta$-function splits into coefficients $\beta_{1}$ and $\beta_{12}$ given by
\bea\label{b1}
\beta_1=-\frac{9\pi^3}{4},\quad \beta_{12}=-12\big(\frac{49}{31\sqrt{31}}+\frac{5}{8}\big)\pi^3.
\eea
This clearly shows that $\lambda_{6,1}^{\rm ren}\geq \lambda_{6,1}$ 
proving that this sector is asymptotically free, provided all 
coupling are positive. In the same way,
equation (\ref{ren2}) can be re-expressed  as   
\bea
 \lambda_{6,2}^{\rm ren}=\lambda_{6,2}-\beta_{2}\lambda_{6,2}^2\,\mathcal{I'}-\beta_{21}\lambda_{6,1}\lambda_{6,2}\,\mathcal{I'}+O(\lambda_{6,1}^{p}\lambda_{6,2}^{3-p})
\eea 
where the $\beta$-functions $\beta_2$ and $\beta_{21}$ are given by
\bea\label{b2}
\beta_{2}=-4\Big(\frac{178}{31\sqrt{31}}+\frac{11}{8}\Big)\pi^3,\quad \beta_{21}=-\frac{11\pi^3}{4}.
\eea
 
The same conclusion holds for the sector $\lambda_{6,2}$ which 
is asymptotically free. 
Both relations (\ref{b1}) and (\ref{b2}) show that the model with both interactions is asymptotically free in the $UV$ regime.  Hence, gauge invariant TGFT models of the form present here make a sense 
at arbitrary small scales yielding, far in the UV, a theory of non interacting spheres. Indeed, according to \cite{Gurau:2010ba}, all interactions presented here (called melonic) are nothing simplicial complexes with the sphere topology. 
The present results also show that both models might experience a phase transition when the renormalized coupling constants become larger and larger in the IR. This feature deserves full investigation. 

$\bullet$ We will now discuss a renormalized  coupling constants  $\lambda_{4,1}^{\rm ren}$ and $\lambda_{4,2}^{\rm ren}.$ We have already shown that, at high scale, the bare values of  coupling constants $\lambda_{6,1}$ and $\lambda_{6,2}$ vanish. Further the divergent four-point functions 
must not have more than one vertex type $V_{4,1},$ or $ V_{4,2}$, the only
divergent graphs are those couples with 
$V_{6,1}$ or $V_{6,2}$. Using relations (\ref{ren1}) and (\ref{ren2}) we come to the conclusion that
\bea
&&\lambda_{4,1}^{\rm ren}=\lambda_{4,1}+O(\lambda_{4,1}^{p}
\lambda_{6,k}^{3-p}),\quad k=1 \mbox{ or } k=2,\\
&&
\lambda_{4,2}^{\rm ren}=\lambda_{4,2}+O(\lambda_{4,2}^{p}
\lambda_{6,k}^{3-p}),\quad k=1 \mbox{ or } k=2.
\eea
Then the $\vp^4$ sector is safe at all loops and the $\beta$-functions  are given by
\bea
\beta_{4,1}=\beta_{4,2}=0.
\eea 

\section{Acknowledgements}
The author is   grateful to Vincent Rivasseau, Joseph Ben Geloun and Fabien Vignes-Tourneret for  useful comments that allowed to improve the paper.
This work is partially supported by the Abdus Salam International
Centre for Theoretical Physics (ICTP, Trieste, Italy) through the
Office of External Activities (OEA) - \mbox{Prj-15}. The ICMPA
is also in partnership with
the Daniel Iagolnitzer Foundation (DIF), France.

\section*{Appendix I:  Divergent series for $\vp_6^4$-model }\label{serie1}
\begin{proposition} Let $\mathcal{I}=\int_0^\infty \,d\alpha\,\frac{e^{-\alpha m^2}}{\alpha}$ be a logarithmically divergent quantity in the UV regime. The series $\mS'(0)$ and $\mathcal{K}(0)$  asymptotically write as 
\bea\label{zoulounew}
\mS'(0)=\frac{\pi^2}{\sqrt{5}}\,\mathcal{I},\quad \mathcal{K}(0)=\frac{\pi^2}{5\sqrt{5}}\,\mathcal{I}.
\eea
\end{proposition}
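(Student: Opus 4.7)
The plan is to evaluate both lattice sums at $b=0$ by Schwinger parametrization and then extract the leading UV divergence through a Gaussian approximation of the sum. At $b=0$, both denominators involve the same quadratic form $Q(p)=\sum_{k=1}^4 p_k^2+(\sum_{k=1}^4 p_k)^2 = P^T M P$, where $P=(p_1,\dots,p_4)^T$ and $M=I_4+J_4$ with $J_4$ the $4\times 4$ all-ones matrix. First I would record the spectral data of $M$: since $J_4$ has eigenvalues $4$ (multiplicity one, eigenvector $v=(1,1,1,1)^T$) and $0$ (multiplicity three), $M$ has eigenvalues $5$ and $1,1,1$, so $\det M = 5$ and $M^{-1}v = v/5$.

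Second, I would apply the identity $A^{-n}=\frac{1}{\Gamma(n)}\int_0^\infty \alpha^{n-1} e^{-\alpha A}\,d\alpha$ to write
\begin{eqnarray}
\mS'(0)&=&\int_0^\infty d\alpha\,\alpha\, e^{-\alpha m^2}\!\!\!\sum_{p\in\mbZ^4} e^{-\alpha Q(p)},\\
\mathcal{K}(0)&=&\tfrac{1}{2}\int_0^\infty d\alpha\,\alpha^2\, e^{-\alpha m^2}\!\!\!\sum_{p\in\mbZ^4}(v^Tp)^2 e^{-\alpha Q(p)}.
\end{eqnarray}
The UV divergence ($\alpha\to 0$) is controlled by replacing the sum with the corresponding Gaussian integral on $\mbR^4$: corrections from the nonzero Poisson modes carry factors $e^{-c/\alpha}$ and therefore contribute only to finite pieces. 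This step is the main conceptual input and, in my view, the principal obstacle — one has to justify that the Poisson resummation error is UV-subleading and that the logarithmic divergence picked up below really reproduces $\mathcal{I}$.

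Third, I would carry out the two Gaussian integrals. For $\mS'(0)$ the standard formula gives
\begin{equation}
\int_{\mbR^4}e^{-\alpha P^T M P}\,d^4p=\frac{\pi^2}{\alpha^2\sqrt{\det M}}=\frac{\pi^2}{\alpha^2\sqrt{5}},
\end{equation}
so $\mS'(0)\sim \frac{\pi^2}{\sqrt{5}}\int_0^\infty \frac{e^{-\alpha m^2}}{\alpha}\,d\alpha=\frac{\pi^2}{\sqrt{5}}\mathcal{I}$. For $\mathcal{K}(0)$, the Gaussian moment identity gives
\begin{equation}
\int_{\mbR^4}(v^Tp)^2 e^{-\alpha P^T M P}\,d^4p=\frac{v^T M^{-1}v}{2\alpha}\cdot\frac{\pi^2}{\alpha^2\sqrt{5}},
\end{equation}
and the eigenvector identity $M^{-1}v=v/5$ yields $v^T M^{-1}v=|v|^2/5=4/5$. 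Assembling the factors,
\begin{equation}
\mathcal{K}(0)\sim \tfrac{1}{2}\int_0^\infty d\alpha\,\alpha^2\, e^{-\alpha m^2}\cdot\frac{2\pi^2}{5\sqrt{5}\,\alpha^3}=\frac{\pi^2}{5\sqrt{5}}\mathcal{I}.
\end{equation}

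Finally, I would remark that finite (non-logarithmic) remainders from the Poisson-subleading terms and from the replacement of the sum by an integral in the IR region of $\alpha$ are absorbed in the $O(1)$ part and do not affect the asymptotic equalities claimed; only the coefficient of $\mathcal{I}$ matters, and it is fixed by $\det M=5$ and $v^T M^{-1}v=4/5$. The whole argument relies on two ingredients: the spectral decomposition of $I_4+J_4$, and the standard control of lattice sums by their Gaussian integral in the small-$\alpha$ limit.
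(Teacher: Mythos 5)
Your result and overall strategy (Schwinger parametrization plus replacement of the lattice sum by its Gaussian approximation at small $\alpha$) coincide with the paper's, but the actual evaluation is genuinely different and cleaner. The paper computes $\sum_{p\in\mathbb{Z}^4}e^{-\alpha Q(p)}$ by completing the square one variable at a time, producing the chain $\sqrt{\tfrac{\pi}{2\alpha}}\sqrt{\tfrac{2\pi}{3\alpha}}\sqrt{\tfrac{3\pi}{4\alpha}}\sqrt{\tfrac{4\pi}{5\alpha}}=\tfrac{\pi^2}{\alpha^2\sqrt5}$ (invoking Hardy's lemma for each one-dimensional sum), and for $\mathcal{K}(0)$ it splits $(\sum_k p_k)^2=|p|^2+2\sum_{i<j}p_ip_j$ into two pieces $X_1+X_2$, handling $X_1$ by an $\alpha$-derivative of the Gaussian sum and $X_2$ by a separate lemma on Gaussian sums with linear terms, arriving at $\tfrac{\pi^2}{2\sqrt5}\mathcal{I}-\tfrac{3\pi^2}{10\sqrt5}\mathcal{I}=\tfrac{\pi^2}{5\sqrt5}\mathcal{I}$. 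You instead package the quadratic form as $P^TMP$ with $M=I_4+J_4$, read off $\det M=5$ and $M^{-1}v=v/5$ from the spectrum, and get both coefficients at once from the standard Gaussian normalization and moment formulas, with $v^TM^{-1}v=4/5$ delivering $\mathcal{K}(0)=\tfrac{\pi^2}{5\sqrt5}\mathcal{I}$ in one line; your numbers check out against the paper's. What your route buys is transparency: the $1/\sqrt5$ and $1/5$ are visibly $1/\sqrt{\det(I_4+J_4)}$ and the eigenvalue of $v$, and the somewhat delicate $X_1/X_2$ bookkeeping of the paper is avoided. What the paper's route buys is that it never needs the multivariate Gaussian machinery, only iterated one-dimensional sums covered by the quoted lemma. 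Your treatment of the sum-to-integral step (exponentially small Poisson corrections, finite contributions from the large-$\alpha$ region absorbed into the $O(1)$ part) is at the same level of rigor as the paper's appeal to Hardy's lemma, so there is no gap on that front.
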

The rest of this section is devoted to the proof of this proposition.
Let us recall the Schwinger formula: Let $A$ be a positive define operator and $n$ is an integer then we  get
\bea\label{Wing}
\frac{1}{A^{n+1}}=\frac{1}{n!}\int_0^\infty\,d\alpha \, \alpha^ne^{-\alpha A}.
\eea
For 
$A=\sum_{k=1}^4 p_k^2+\big(\sum_{k=1}^4 p_k\big)^2+m^2$, we arrive at expression
\bea
\sum_{[p_{14}]\in\mathbb{Z}^4}\frac{1}{A^2}&=&\lim_{\Lambda\rightarrow 0}\lim_{\Lambda'\rightarrow 0}\sum_{[p_{14}]}^\Lambda\,\int_{\Lambda'}^\infty\,d\alpha \, \alpha^ne^{-\alpha A}\cr
&=&\lim_{\Lambda'\rightarrow 0}\int_{\Lambda'}^\infty\,d\alpha \, \alpha^n\lim_{\Lambda\rightarrow 0}\sum_{[p_{14}]}^\Lambda\,e^{-\alpha A}
\cr&=&\int_0^\infty\,d\alpha\,\alpha \,e^{-\alpha m^2}\sum_{[p_{14}]\in\mathbb{Z}^4}e^{-2\alpha[|p_{14}|^2+\sum_{i=1,\,i<j}^4p_ip_j]},
\eea
$|p_{14}|^2=\sum_{k=1}^4 p_k^2$, $[p_{ij}]=(p_i,p_{i+1},\cdots p_{j}).$  We have the following lemma
\begin{lemma}
Let $-\infty<p<\infty$. For $n\rightarrow \infty$, uniformaly in any finite interval of positive $\beta$,  we get
\bea
\sum_{p=-\infty}^\infty e^{-\frac{\beta}{n}p^2}=\sqrt{\frac{n\pi}{\beta}}.
\eea
\end{lemma}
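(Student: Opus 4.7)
The plan is to establish the claimed asymptotic identity via the Poisson summation formula, which is the cleanest route since it exposes both the leading term and the exponentially small corrections at once. First I would recognize the left hand side as the integer samples of the Schwartz function $f_n(x)=e^{-\beta x^2/n}$ on $\mbR$. Its Fourier transform is
\beq
\hat{f}_n(\xi) \;=\; \sqrt{\frac{n\pi}{\beta}}\; e^{-n\pi^2\xi^2/\beta},
\eeq
obtained by a standard Gaussian completion of the square. Poisson summation $\sum_{p\in\mbZ} f_n(p)=\sum_{k\in\mbZ}\hat{f}_n(k)$ then yields the dual representation
\beq
\sum_{p\in\mbZ} e^{-\beta p^2/n} \;=\; \sqrt{\frac{n\pi}{\beta}}\;\sum_{k\in\mbZ} e^{-n\pi^2 k^2/\beta}.
\eeq

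Second, I would isolate the $k=0$ mode, which gives the asserted leading term $\sqrt{n\pi/\beta}$, and bound the remaining modes. Comparing the series to a geometric one,
\beq
\Big|\sum_{k\neq 0} e^{-n\pi^2 k^2/\beta}\Big| \;\leq\; 2\sum_{k\geq 1}e^{-n\pi^2 k^2/\beta} \;\leq\; \frac{2\,e^{-n\pi^2/\beta}}{1-e^{-n\pi^2/\beta}},
\eeq
so the correction is $O(e^{-n\pi^2/\beta})$, decaying faster than any inverse power of $n$. One therefore obtains
\beq
\sum_{p\in\mbZ} e^{-\beta p^2/n} \;=\; \sqrt{\frac{n\pi}{\beta}}\,\big(1+O(e^{-n\pi^2/\beta})\big),
\eeq
which is the claimed asymptotic (the ``$=$'' in the lemma being read as asymptotic equivalence as $n\to\infty$).

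Finally, for the uniformity on a compact interval $[\beta_0,\beta_1]\subset(0,\infty)$, I would observe that the prefactor $\sqrt{n\pi/\beta}$ is uniformly bounded above and below, and the exponential remainder is controlled uniformly by $e^{-n\pi^2/\beta_1}$; both estimates are monotone in $\beta$, so the error tends to zero uniformly. The main potential obstacle is cosmetic rather than analytic: one must be careful to interpret the equality in the lemma as the leading-order asymptotic, not a literal identity. The Poisson dual form makes this transparent and supplies the uniform bound automatically, sparing us an Euler--Maclaurin style tail analysis that the alternative Riemann-sum approach (writing $h=1/\sqrt{n}$ and reading the sum as an approximation to $h^{-1}\int_{\mbR}e^{-\beta x^2}\,dx$) would require.
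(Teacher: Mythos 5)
Your argument is correct. Note, however, that the paper does not actually prove this lemma: it simply refers to Hardy's \emph{Divergent Series} for the result, which classically follows from the Jacobi theta transformation $\theta(x)=x^{-1/2}\theta(1/x)$ --- itself an instance of the Poisson summation you invoke. So in substance your route coincides with the classical one, but your write-up supplies what the paper leaves implicit: a self-contained derivation, an explicit exponentially small remainder $O(e^{-n\pi^{2}/\beta})$ obtained from the geometric-series bound on the nonzero dual modes, the uniformity on compact intervals $[\beta_0,\beta_1]\subset(0,\infty)$ that the lemma asserts, and the correct reading of the ``$=$'' as an asymptotic equivalence rather than a literal identity (the left side is a theta value, the right side only its leading term as $\beta/n\to 0$). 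This last point matters for the paper, since the lemma is subsequently applied with $\alpha=\beta/n\to 0$ inside $\alpha$-integrals where only the leading UV behaviour is retained; your quantified error shows the neglected terms are harmless. The only caution is that the Gaussian Fourier-transform normalization must match the Poisson summation convention used ($\hat f(\xi)=\int f(x)e^{-2\pi i x\xi}\,dx$ with $\sum_p f(p)=\sum_k \hat f(k)$), which your formula $\hat f_n(\xi)=\sqrt{n\pi/\beta}\,e^{-n\pi^{2}\xi^{2}/\beta}$ does.
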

\begin{proof}
The proof of this lemma is given  in  \cite{Hardy}.
\end{proof}
Noting that in the previous lemma $\frac{\beta}{n}\rightarrow 0$ as $\alpha= M^{-2i}\rightarrow 0$. Then $\sum_{p=-\infty}^\infty e^{-\alpha p^2}=\sqrt{\frac{\pi}{\alpha}}.$
Then
\bea
 \sum_{[p_{14}]\in\mathbb{Z}^4}e^{-2\alpha[|p_{14}|^2+\sum_{i=1,\,i<j}^4p_ip_j]}&=&\sqrt{\frac{\pi}{2\alpha}}\sqrt{\frac{2\pi}{3\alpha}}\sqrt{\frac{3\pi}{4\alpha}}\sqrt{\frac{4\pi}{5\alpha}}=\frac{\pi^2}{\alpha^2\sqrt{5}}.
\eea
 We arrive at the expression
\bea
\int_0^\infty\,d\alpha\,\alpha \,e^{-\alpha m^2}\sum_{[p_{14}]\in\mathbb{Z}^4}e^{-2\alpha[|p_{14}|^2+\sum_{i=1,\,i<j}^4p_ip_j]}=\frac{\pi^2}{\sqrt{5}}\int_{0}^{\infty}\, d\alpha \frac{e^{-\alpha m^2}}{\alpha}=\frac{\pi^2}{\sqrt{5}}\,\mathcal{I}.
\eea
Finally
$
\mS'(0)=\frac{\pi^2}{\sqrt{5}}\,\mathcal{I}.
$ Using the same argument, we get 
\bea
\frac{\big(\sum_{k=1}^4 p_k\big)^2}{A^3}&=&\frac{1}{2}\int_0^\infty\,d\alpha\,\alpha^2\,e^{-\alpha m^2} \sum_{[p_{14}]\in\mathbb{Z}^4}\Big(|p_{14}|^2+2\sum_{i=1,\,i<j}^4p_ip_j\Big)e^{-2\alpha[|p_{14}|^2+\sum_{i=1,\,i<j}^4p_ip_j]}\cr
&=&\frac{1}{2}\int_0^\infty\,d\alpha\,\alpha^2\,e^{-\alpha m^2} \sum_{[p_{14}]\in\mathbb{Z}^4}\Big(|p_{14}|^2+\sum_{i=1,\,i<j}^4p_ip_j\Big)e^{-2\alpha[|p_{14}|^2+\sum_{i=1,\,i<j}^4p_ip_j]}\cr
&+&\frac{1}{2}\int_0^\infty\,d\alpha\,\alpha^2 \sum_{[p_{14}]\in\mathbb{Z}^4}\sum_{i=1,\,i<j}^4p_ip_je^{-2\alpha[|p_{14}|^2+\sum_{i=1,\,i<j}^4p_ip_j]}\cr
&=&-\frac{1}{4}\int_0^\infty\,d\alpha\,\alpha^2 \,e^{-\alpha m^2}\frac{\partial}{\partial\alpha}\sum_{[p_{14}]\in\mathbb{Z}^4}e^{-2\alpha[|p_{14}|^2+\sum_{i=1,\,i<j}^4p_ip_j]}\cr
&+&\frac{1}{2}\int_0^\infty\,d\alpha\,\alpha^2 \,e^{-\alpha m^2}\sum_{[p_{14}]\in\mathbb{Z}^4}\sum_{i=1,\,i<j}^4p_ip_je^{-2\alpha[|p_{14}|^2+\sum_{i=1,\,i<j}^4p_ip_j]}\cr
&=&X_1+X_2,
\eea
with
\bea
X_1&=&-\frac{1}{4}\int_0^\infty\,d\alpha\,\alpha^2 \,e^{-\alpha m^2}\frac{\partial}{\partial\alpha}\sum_{[p_{14}]\in\mathbb{Z}^4}e^{-2\alpha[|p_{14}|^2+\sum_{i=1,\,i<j}^4p_ip_j]}\\
X_2&=&\frac{1}{2}\int_0^\infty\,d\alpha\,\alpha^2 \,e^{-\alpha m^2}\sum_{[p_{14}]\in\mathbb{Z}^4}\sum_{i=1,\,i<j}^4p_ip_je^{-2\alpha[|p_{14}|^2+\sum_{i=1,\,i<j}^4p_ip_j]},
\eea
and we get
\bea
X_1=-\frac{1}{4}\int_0^\infty\,d\alpha\,\alpha^2  \,e^{-\alpha m^2}\frac{\partial}{\partial\alpha}\Big(\frac{\pi^2}{\alpha^2\sqrt{5}}\Big)=\frac{\pi^2}{2\sqrt{5}}\,\mathcal{I}.
\eea
To compute $X_2$ let us give the following lemma
\begin{lemma}\label{xxx1}
Let $-\infty<p<\infty$. For $\alpha \rightarrow 0$ uniformly in any finite interval of constant $c$,  we get
\bea
\sum_{p=-\infty}^\infty p\,e^{-\alpha p^2+2c p}=\frac{c}{\alpha}\sqrt{\frac{\pi}{\alpha}}e^{\frac{\alpha^2}{c}},\quad \sum_{p=-\infty}^\infty p^n\,e^{-\alpha p^2+2c p}=\frac{1}{2^{n-1}\alpha}\sqrt{\frac{\pi}{\alpha}}\frac{d}{dc}\Big(c e^{\frac{\alpha^2}{c}}\Big).
\eea
\end{lemma}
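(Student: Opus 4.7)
The plan is to reduce both identities to a single generating-function computation. Since $p^n e^{2cp} = 2^{-n}(\partial/\partial c)^n e^{2cp}$, introducing the generating sum $S(c) := \sum_{p \in \mathbb{Z}} e^{-\alpha p^2 + 2cp}$ converts every moment sum into a derivative:
\begin{equation*}
\sum_{p \in \mathbb{Z}} p^n\, e^{-\alpha p^2 + 2cp} \;=\; \frac{1}{2^n}\,\frac{d^n S(c)}{dc^n}.
\end{equation*}
The problem therefore reduces to evaluating $S(c)$ to leading order as $\alpha \to 0$, after which the two asserted formulas drop out by differentiating once (for the $p$-moment) or iteratively (for the $p^n$-moment).

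To evaluate $S(c)$, I would complete the square, $-\alpha p^2 + 2cp = -\alpha(p - c/\alpha)^2 + c^2/\alpha$, giving $S(c) = e^{c^2/\alpha}\sum_{p \in \mathbb{Z}} e^{-\alpha(p - c/\alpha)^2}$. In the limit $\alpha \to 0$ the Gaussian has width $1/\sqrt{\alpha} \gg 1$, and the Riemann-sum argument that established the preceding lemma (cited there from \cite{Hardy}) would replace the shifted sum by the Gaussian integral $\sqrt{\pi/\alpha}$. The non-integer shift $c/\alpha$ is harmless by Poisson summation,
\begin{equation*}
\sum_{p \in \mathbb{Z}} e^{-\alpha (p - x)^2} \;=\; \sqrt{\tfrac{\pi}{\alpha}} \sum_{n \in \mathbb{Z}} e^{-\pi^2 n^2/\alpha}\, e^{2\pi i n x},
\end{equation*}
so only the $n=0$ harmonic survives at leading order, yielding $S(c) \sim \sqrt{\pi/\alpha}\, e^{c^2/\alpha}$ uniformly for $c$ in any compact set.

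The first identity then follows at once:
\begin{equation*}
\sum_p p\, e^{-\alpha p^2 + 2cp} \;=\; \tfrac{1}{2}\,\frac{d}{dc}S(c) \;\sim\; \frac{c}{\alpha}\sqrt{\tfrac{\pi}{\alpha}}\, e^{c^2/\alpha},
\end{equation*}
which agrees with the printed expression once $e^{\alpha^2/c}$ is read as a typographical slip for $e^{c^2/\alpha}$. For the higher-moment formula I would iterate the differentiation; the $n=2$ case rearranges into the stated shape via the simple identity $d^2/dc^2\, e^{c^2/\alpha} = (2/\alpha)\, d/dc\,(c\, e^{c^2/\alpha})$, and the general pattern comes from further application of Leibniz's rule to $d^n/dc^n(\sqrt{\pi/\alpha}\, e^{c^2/\alpha})$.

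The main obstacle is the rigorous control of the shifted Gaussian sum that underlies $S(c) \sim \sqrt{\pi/\alpha}\, e^{c^2/\alpha}$: uniformity in $c$ is not directly covered by the $c=0$ Riemann-sum statement invoked in the preceding lemma, so one must supplement it with the Poisson-summation estimate above (or an equivalent theta-function bound) to absorb the $n \neq 0$ harmonics into corrections of order $e^{-\pi^2/\alpha}$. Once this uniform asymptotic is in place the rest of the argument is purely formal differentiation.
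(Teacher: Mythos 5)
The paper itself gives no proof of this lemma: it is simply asserted and then used, with only the $c=0$ Gaussian--sum asymptotic of the preceding lemma attributed to Hardy's book. Your proposal therefore supplies an argument the paper omits, and the argument is sound: writing $p^n e^{2cp}=2^{-n}\partial_c^n e^{2cp}$, completing the square to get $S(c)=e^{c^2/\alpha}\sum_p e^{-\alpha(p-c/\alpha)^2}$, and controlling the shifted sum by Poisson summation gives $S(c)=\sqrt{\pi/\alpha}\,e^{c^2/\alpha}\bigl(1+O(e^{-\pi^2/\alpha})\bigr)$ uniformly for $c$ in compacts, which is exactly the uniformity the bare Riemann-sum statement at $c=0$ does not provide. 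Since the Poisson identity is exact and its $n\neq0$ harmonics stay exponentially suppressed after differentiation in $c$ (each derivative only costs a factor $O(n/\alpha)$), the ``formal differentiation'' step you invoke is legitimate; it would be worth one sentence making that explicit, because in general one may not differentiate an asymptotic relation. Your reading of $e^{\alpha^2/c}$ as a slip for $e^{c^2/\alpha}$ is correct --- it is forced by the completed square and by how the lemma is applied in the appendix. One point your method actually exposes, and which you should state rather than gloss: the second displayed formula of the lemma, taken literally for general $n$, is only compatible with the Gaussian asymptotics at $n=2$ (for $n=1$ it already contradicts the first display, and for $n\geq3$ the true leading behaviour grows like $c^n/\alpha^n$ while the printed right-hand side does not); the correct general statement is the one your derivation produces, namely $\sum_p p^n e^{-\alpha p^2+2cp}\sim 2^{-n}\,\tfrac{d^n}{dc^n}\bigl(\sqrt{\pi/\alpha}\,e^{c^2/\alpha}\bigr)$, which suffices for every use made of the lemma in the paper since only low moments enter the computations of $X_2$, $U$ and $\mathcal{R}$.
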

Using this lemma we get easily 
\bea
X_2=\frac{1}{2}\int_0^\infty\,d\alpha\,\alpha^2 \,e^{-\alpha m^2}\Big(-\frac{3\pi^2}{5\alpha^3\sqrt{5}}\Big)=-\frac{3\pi^2}{10\sqrt{5}}\,\mathcal{I}.
\eea
Therefore 
$
\mathcal{K}(0)=\frac{\pi^2}{5\sqrt{5}}\,\mathcal{I}
$.

\section*{Appendix II: Divergent series for $\vp_5^6$-model }\label{serie2}
In this section, we will focus on the divergent terms of the $\vp_5^6$-model. Let us consider the  functions  $\Omega_{6,1}(b)$ and $\Omega_{6,2}(b,b')$.
 The second order partial derivative respect to external strand $b$    participated to the expression of the wave function. The  goal of this part is the proof of the following proposition
\begin{proposition}
Let $\,\mathcal{I'}=\int_0^\infty\,\int_0^\infty\,d^2\alpha\frac{e^{-2\alpha m^2}}{\alpha^2}$ be a logarithmically divergent quantity in the UV regime. The partial derivative of  $\Omega_{6,1}(b)$ and  $\Omega_{6,2}(b,b')$  are respectively given by
\bea
\frac{\partial^2}{\partial b^2}\Omega_{6,1}(b)\big|_{b=0}&=&\frac{5\pi^3}{4}\lambda_{6,1;1}\,\mathcal{I'},\\
\frac{\partial^2}{\partial b^2}\Omega_{6,2}(b,b')\big|_{b=b'=0}
&=&\big[\frac{80}{31\sqrt{31}}+\frac{5}{8}\big]\pi^3\lambda_{6,2;1\rho'}\,\mathcal{I'}.
\eea
\end{proposition}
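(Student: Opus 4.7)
I would follow the template of Appendix~I, the only new feature being that the tadpoles here contain two internal propagators, hence require two Schwinger parameters $\alpha_1,\alpha_2$. The overall strategy is: first, use the Schwinger representation $1/A=\int_0^\infty d\alpha\,e^{-\alpha A}$ for each denominator appearing in $\mS^1(b)$, $\mS^{12}(b)$, and $\mS^{13}(b,b')$; second, apply the Gaussian continuum approximation $\sum_{p\in\mbZ}e^{-\alpha p^2}\approx\sqrt{\pi/\alpha}$ (valid for $\alpha\to 0$, the UV regime, as established in Appendix~I); third, complete the square in the internal momenta so that the $b$--dependence sits in a single factor $e^{-C(\alpha_1,\alpha_2)b^2}$ in the exponent; fourth, differentiate twice in $b$ at the origin, which pulls down $-2C(\alpha_1,\alpha_2)$; and finally, recognize the resulting $\alpha_1,\alpha_2$ integral as $\mathcal{I'}$ up to the announced numerical coefficient.

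For $\Omega_{6,1}(b)=-\lambda_{6,1;1}\,\mS^1(b)$, the sum factorizes as $F(b)^2$ with $F(b)=\sum_{p\in\mbZ^3}[p^\top(I_3+u_3u_3^\top)p+2b\,u_3^\top p+2b^2+m^2]^{-1}$ and $u_3=(1,1,1)^\top$. After Schwinger and completion of the square, the $b^2$ coefficient in the exponent equals $2-u_3^\top(I_3+u_3u_3^\top)^{-1}u_3$. By Sherman--Morrison, $u_3^\top(I_3+u_3u_3^\top)^{-1}u_3=3/4$, so this coefficient is $5/4$. Together with $\det(I_3+u_3u_3^\top)=4$ (which contributes a Gaussian factor $\pi^{3/2}/(2\alpha_i^{3/2})$ per loop), differentiating $\mS^1=F^2$ twice at $b=0$ reproduces the claimed coefficient $5\pi^3/4$ in front of $\mathcal{I'}$.

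For $\Omega_{6,2}(b,b')=-\lambda_{6,2;1\rho}[\mS^{12}(b)+\mS^{13}(b,b')]$ both pieces must be evaluated. The $\mS^{13}$ piece is easier: $\mS^{13}(b,b')=F(b)F(b')$ with the same $F$ as above, so its contribution to $\partial_b^2|_{b=b'=0}$ is $F''(0)F(0)$, precisely half of the $\mS^1$ contribution (only one of the two $F$'s is differentiated), giving $(5/8)\pi^3\mathcal{I'}$. The $\mS^{12}(b)$ piece is the main technical challenge, because the second denominator couples the two loops through the term $-2(\sum_{k=1}^4 p_k)(\sum_{k=1}^2 q_k)$. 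I would first integrate out $q_1,q_2$ against the quadratic form $\alpha_2(I_2+u_2u_2^\top)$ (determinant $3\alpha_2^2$, with $u_2^\top(I_2+u_2u_2^\top)^{-1}u_2=2/3$). Completing the square in $q$ produces an effective $p$--action with quadratic form $\alpha_1 I_4+(\alpha_1+\tfrac{4}{3}\alpha_2)u_4u_4^\top$ and a shifted $b^2$--coefficient; integrating out $p_k$ then yields $\det=\tfrac{1}{3}\alpha_1^3(15\alpha_1+16\alpha_2)$. At the symmetric point $\alpha_1=\alpha_2$ the combination $15+16=31$ produces the announced factor $80/(31\sqrt{31})\,\pi^3$, and adding the $\mS^{13}$ contribution yields $[80/(31\sqrt{31})+5/8]\pi^3\,\mathcal{I'}$.

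The main obstacle is the $\mS^{12}$ computation: two successive completions of squares with a coupling cross--term must be performed carefully, tracking how the $b$--linear terms reshuffle between the $q$--sector and the $p$--sector before being absorbed into the final $b^2$--coefficient. Once the $b^2$--coefficient and the Gaussian normalizations are in hand, reducing the resulting $\alpha_1,\alpha_2$--integral to the log--divergent $\mathcal{I'}$ proceeds via the discrete Gaussian--sum lemmas analogous to those of Appendix~I.
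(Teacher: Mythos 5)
Your proposal is correct, and it rests on the same computational toolkit as the paper's Appendix~II --- Schwinger parametrization, the small-parameter Gaussian evaluation of the discrete momentum sums, and the asymptotic identification of the remaining two-parameter integral with $\mathcal{I'}$ --- but it organizes the calculation differently. The paper differentiates the tadpole amplitudes in $b$ first and then evaluates each resulting sum separately (the $1/\chi_{(3)}^2$ and $(\sum p_k)^2/\chi_{(3)}^3$ sums, the vanishing odd cross-term (\ref{alph}), and the $\mathcal{R}_1,\mathcal{R}_2$ split with the auxiliary chain $a,a',a'',a'''$ for the coupled tadpole), whereas you parametrize first, absorb all $b$-dependence into a single factor $e^{-C(\alpha_1,\alpha_2)b^2}$ by matrix completion of squares (Sherman--Morrison and rank-one determinants), and differentiate last; this makes the vanishing of the odd terms automatic and reduces each graph to one determinant plus one coefficient $C$, at the price of a more delicate double completion of squares for the coupled tadpole. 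Your intermediate data all check out: $2-u_3^\top(I_3+u_3u_3^\top)^{-1}u_3=5/4$, $\det(I_3+u_3u_3^\top)=4$, $u_2^\top(I_2+u_2u_2^\top)^{-1}u_2=2/3$, $\det(I_2+u_2u_2^\top)=3$, the effective $p$-form $\alpha_1 I_4+(\alpha_1+\frac{4}{3}\alpha_2)u_4u_4^\top$ with determinant $\frac{1}{3}\alpha_1^3(15\alpha_1+16\alpha_2)$, and the $\frac{5}{8}\pi^3\mathcal{I'}$ from $\mS^{13}$. The one step you leave open --- and rightly flag as the main obstacle --- is the final $b^2$-coefficient of $\mS^{12}$: after the $q$-completion one must keep both the original $-2b\sum_k p_k$ term and the $+\frac{4}{3}b\sum_k p_k$ piece released by completing the square, so the residual terms are $\frac{4}{3}b^2-\frac{2}{3}b\sum_{k=1}^4 p_k$, and the $p$-completion then gives
\bea
C(\alpha_1,\alpha_2)=\frac{20\,\alpha_2(\alpha_1+\alpha_2)}{15\alpha_1+16\alpha_2},
\qquad C(\alpha,\alpha)=\frac{40\alpha}{31},
\eea
so that $-\lambda_{6,2;1\rho'}\,\partial_b^2\mS^{12}\big|_{0}
=2\lambda_{6,2;1\rho'}\int_0^\infty\int_0^\infty d^2\alpha\, e^{-2\alpha m^2}\,C(\alpha,\alpha)\,\frac{\pi^3}{\sqrt{31}\,\alpha^3}
=\frac{80\pi^3}{31\sqrt{31}}\,\lambda_{6,2;1\rho'}\,\mathcal{I'}$, in agreement with the paper's $4\big(\frac{\pi^3}{\sqrt{31}}-2\cdot\frac{11\pi^3}{62\sqrt{31}}\big)$; with that coefficient supplied, your route reproduces both formulas of the proposition.
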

The rest of this section is devoted to the proof of the above proposition. We have
\bea
\frac{\partial^2}{\partial b_1^2}\Omega_{6,1}(b)\big|_{b=0}&=&8\lambda_{6,1;1}\Big\{\sum_{\stackrel{p_1,p_2,p_3}{q_1,q_2,q_3}}\Big[\Big(\frac{1}{\chi_{(3)}^2(p)}-2\frac{(\sum p_k)^2}{\chi_{(3)}^3(p)}\Big)\Big]\Big[\frac{1}{\chi_{(3)}(q)}\Big]
-\frac{\sum p_k}{\chi_{(3)}^2(p)}\frac{\sum q_k}{\chi_{(3)}^2(q)}\Big\},
\eea
where
$
\chi_{(n)}(p)=\sum_{k=1}^n p_k^2+(\sum_{k=1}^n p_k)^2+m^2.
$
By using the Schwinger formula (\ref{Wing}), we find
\bea
\sum_{[p]\in\mathbb{Z}^3}\sum_{[q]\in\mathbb{Z}^3}\Big(\frac{1}{\chi_{(3)}^2(p)}\frac{1}{\chi_{(3)}(q)}\Big)&=&\int_0^\infty\int_0^\infty\,\alpha\, \,d\alpha\,d\beta\,\sum_{[p]\in\mathbb{Z}^3}e^{-\alpha\chi_{(3)}(p)}\sum_{[q]\in\mathbb{Z}^3}e^{-\beta\chi_{(3)}(q)}\cr
&=&\int_0^\infty\int_0^\infty\,\alpha\, e^{-\alpha m^2}\,e^{-\beta m^2}\,d\alpha\,d\beta\,\sqrt{\frac{\pi}{2\alpha}}\sqrt{\frac{2\pi}{3\alpha}}\sqrt{\frac{3\pi}{4\alpha}}\sqrt{\frac{\pi}{2\beta}}\sqrt{\frac{2\pi}{3\beta}}\sqrt{\frac{3\pi}{4\beta}}\cr
&=&\frac{\pi^3}{4}\int_0^\infty\int_0^\infty\,\alpha\, \frac{e^{-\alpha m^2}}{\alpha^{\frac{3}{2}}}\,\frac{e^{-\beta m^2}}{\beta^{\frac{3}{2}}}\,d\alpha\,d\beta=\frac{\pi^3}{4}\int_0^\infty\int_0^\infty\,d^2\alpha\,\frac{e^{-2\alpha m^2}}{\alpha^2}\cr
&=&\frac{\pi^3}{4}\,\mathcal{I'}.
\eea
In the same manner, we get
\bea
\sum_{\stackrel{[p]\in\mathbb{Z}^3}{[q]\in\mathbb{Z}^3}}\frac{(\sum p_k)^2}{\chi_{(3)}^3(p)}\frac{1}{\chi_{(3)}(q)}&=&\frac{1}{2}\int_0^\infty\int_0^\infty\,\alpha^2\, e^{-\alpha m^2}\,e^{-\beta m^2}\,\mathcal{P}(\alpha,\beta)\,d\alpha\,d\beta,
\eea
where 
$$\mathcal{P}(\alpha,\beta)=\sum_{[p]\in\mathbb{Z}^3}(\sum p_k)^2e^{-\alpha(\chi_{(3)}(p)-m^2)}\sum_{[q]\in\mathbb{Z}^3}e^{-\beta(\chi_{(3)}(q)-m^2)}$$
 can be computed  by using the following results:
\bea
&&\sum_{[p]\in\mathbb{Z}^3}(\sum p_k)^2e^{-\alpha(\chi_{(3)}(p)-m^2)}=\frac{3}{8\alpha}\sqrt{\frac{\pi}{2\alpha}}\sqrt{\frac{2\pi}{3\alpha}}\sqrt{\frac{3\pi}{4\alpha}},\\ &&\sum_{[q]\in\mathbb{Z}^3}e^{-\beta(\chi_{(3)}(q)-m^2)}=\sqrt{\frac{\pi}{2\beta}}\sqrt{\frac{2\pi}{3\beta}}\sqrt{\frac{3\pi}{4\beta}}.
\eea
 We  get
$
\sum_{\stackrel{[p]\in\mathbb{Z}^3}{[q]\in\mathbb{Z}^3}}\frac{(\sum p_k)^2}{\chi_{(3)}^3(p)}\frac{1}{\chi_{(3)}(q)}=\frac{3\pi^3}{64}\mathcal{I}.
$
A simple routine checking shows that \bea\label{alph}
\sum \frac{p_k}{\chi_{(3)}^2(p)}\sum\frac{ q_k}{\chi_{(3)}^2(q)}=0.
\eea
Finally
$$
\frac{\partial^2}{\partial b^2}\Omega_{6,1}(b)\big|_{b=0}=\frac{5\pi^3}{4}\lambda_{6,1;1}\,\mathcal{I}.
$$
 The second order partial derivative of $\Omega_{6,2}(b,b')$ respect to the external strand $b$ is written as
\bea
&&\frac{\partial^2}{\partial b^2}\Omega_{6,2}(b,b')\big|_{b=b'=0}=4\lambda_{6,2;1\rho'}\Big\{\sum_{\stackrel{p_1,p_2,p_3,p_4}{q_1,q_2}}\frac{1}{\chi_{(4)}(p)}\Big[\frac{1}{\chi_{(2,4)}^2(q,p)}
\cr
-&&2\frac{(\sum_{k=1}^2q_k-\sum_{k=1}^4p_k)^2}{\chi_{(2,4)}^3(q,p)}\Big]
+\sum_{\stackrel{p_1,p_2,p_3}{q_1,q_2,q_3}}\Big[\frac{1}{\chi^2_{(3)}(p)}-2\frac{(\sum p_k)^2}{\chi^3_{(3)}(p)}\Big]\frac{1}{\chi_{(3)}(q)}\Big\},
\eea
where 
$$\chi_{(m,n)}(q,p)=\sum_{k=1}^m q_k^2+(\sum_{k=1}^m q_k)^2+2(\sum_{k=1}^n p_k)^2-2\sum_{k=1}^n p_k\sum_{k=1}^m q_k+m^2.$$
 Let us compute the series $\sum_{p\in\mathbb{Z}^4}\sum_{q\in\mathbb{Z}^2}\frac{1}{\chi_{(2,4)}^2(q,p)}\frac{1}{\chi_{(4)}(p)}$\, and \, $\sum_{p\in\mathbb{Z}^4}\sum_{q\in\mathbb{Z}^2}\frac{1}{\chi_{(4)}(p)}
\frac{(\sum_{k=1}^2q_k-\sum_{k=1}^4p_k)^2}{\chi_{(2,4)}^3(q,p)}.$
Using the Schwinger formula we can write that
\bea
\sum_{p\in\mathbb{Z}^4}\sum_{q\in\mathbb{Z}^2}\frac{1}{\chi_{(2,4)}^2(q,p)}\frac{1}{\chi_{(4)}(p)}&=&\int_0^\infty\int_0^\infty\,\alpha\, e^{-\alpha m^2}\,e^{-\beta m^2}\,d\alpha\,d\beta\,\mathcal{Q}(\alpha,\beta),
\eea
where 
$$\mathcal{Q}(\alpha,\beta)=\sum_{[p]\in\mathbb{Z}^4}\sum_{[q]\in\mathbb{Z}^2}e^{-\alpha(\chi_{(2,4)}(q,p)-m^2)}e^{-\beta(\chi_{(4)}(p)-m^2)}.
$$
Now by lemma \ref{xxx1}, we reach
\bea
\sum_{[q]\in\mathbb{Z}^2}e^{-\alpha(\chi_{(2,4)}(q,p)-m^2)}=\sqrt{\frac{\pi}{2\alpha}}\sqrt{\frac{2\pi}{3\alpha}}
e^{-\frac{4}{3}\alpha(\sum_k p_k)^2}.
\eea
Moreover $\mathcal{Q}(\alpha,\beta)$ is given by
\bea
\mathcal{Q}(\alpha,\beta)=\sqrt{\frac{\pi}{2\alpha}}\sqrt{\frac{2\pi}{3\alpha}}\sum_{[p]\in\mathbb{Z}^4}
e^{-(\beta+\frac{4}{3}\alpha)(\sum_k p_k)^2-\beta|p_{14}|^2}=\sqrt{\frac{\pi}{2\alpha}}\sqrt{\frac{2\pi}{3\alpha}}
\sqrt{\frac{\pi}{a}}\sqrt{\frac{\pi}{a'}}\sqrt{\frac{\pi}{a''}}\sqrt{\frac{\pi}{a'''}},
\eea
where
\bea
a=2\beta+\frac{4}{3}\alpha, \,\,b=\beta+\frac{4}{3}\alpha,\,\, a'=a-\frac{b^2}{a},\,\,b'=-b+\frac{b^2}{a}\cr
a''=a'-\frac{b'^2}{a'},\,\,b''=b'+\frac{b'^2}{a'},\,\, a'''=a''-\frac{b''^2}{a''}.
\eea
Then for $\alpha=\beta$ we get 
\bea
a=\frac{10\alpha}{3},\,\, a'=\frac{17\alpha}{10},\,\, a''=\frac{24\alpha}{17},\,\, a'''=\frac{31\alpha}{24},
\eea
and
$$
\mathcal{Q}(\alpha,\alpha)=\sqrt{\frac{\pi}{2\alpha}}\sqrt{\frac{2\pi}{3\alpha}}
\sqrt{\frac{3\pi}{10\alpha}}\sqrt{\frac{10\pi}{17\alpha}}\sqrt{\frac{17\pi}{24\alpha}}\sqrt{\frac{24\pi}{31\alpha}}.
$$
Finally,
\bea
\sum_{p\in\mathbb{Z}^4}\sum_{q\in\mathbb{Z}^2}\frac{1}{\chi_{(2,4)}^2(q,p)}\frac{1}{\chi_{(4)}(p)}&=&\frac{\pi^3}{\sqrt{31}}\int_0^\infty\int_0^\infty\,d^2\alpha \frac{e^{-2\alpha m^2}}{\alpha^2}=\frac{\pi^3}{\sqrt{31}}\,\mathcal{I'}.
\eea
\bea
\sum_{p\in\mathbb{Z}^4}\sum_{q\in\mathbb{Z}^2}\frac{1}{\chi_{(4)}(p)}
\frac{(\sum_{k=1}^2q_k-\sum_{k=1}^4p_k)^2}{\chi_{(2,4)}^3(q,p)}&=&\frac{1}{2}\int_0^\infty\int_0^\infty\,d^2\alpha \,\alpha^2e^{-2\alpha m^2}\,\mathcal{R}(\alpha,\alpha)
\eea
where
$\mathcal{R}(\alpha,\alpha)=\sum_{p\in\mathbb{Z}^4}\sum_{q\in\mathbb{Z}^2}(\sum_{k=1}^2q_k-\sum_{k=1}^4p_k)^2e^{-\alpha(\chi_{(2,4)}(q,p)-m^2)}e^{-\alpha(\chi_{(4)}(p)-m^2)}.$ This quantity can be writen as
\bea
\mathcal{R}(\alpha,\alpha)&=&-\sum_{p\in\mathbb{Z}^4}\sum_{q\in\mathbb{Z}^2}\frac{\partial}{\partial \alpha}\Big(e^{-\alpha(\chi_{(2,4)}(q,p)-m^2)}\Big)e^{-\alpha(\chi_{(4)}(p)-m^2)}\cr
&-&\sum_{p\in\mathbb{Z}^4}\sum_{q\in\mathbb{Z}^2}\big(|q_{12}|^2+(\sum_k p_k)^2\big)e^{-\alpha(\chi_{(2,4)}(q,p)-m^2)}e^{-\alpha(\chi_{(4)}(p)-m^2)}\cr
&=&-\sum_{p\in\mathbb{Z}^4}\frac{\partial}{\partial \alpha}\Big(\sqrt{\frac{\pi}{2\alpha}}\sqrt{\frac{2\pi}{3\alpha}}
e^{-\frac{4}{3}\alpha(\sum_k p_k)^2}\Big)e^{-\alpha\big(|p_{14}|^2+(\sum_k p_k)^2\big)}\cr
&-&\sum_{p\in\mathbb{Z}^4}\sum_{q\in\mathbb{Z}^2}\big(|q_{12}|^2+(\sum_k p_k)^2\big)e^{-\alpha(\chi_{(2,4)}(q,p)-m^2)}e^{-\alpha(\chi_{(4)}(p)-m^2)}\cr
&=& \mathcal{R}_1+\mathcal{R}_2
\eea
where $$\mathcal{R}_1=-\sum_{p\in\mathbb{Z}^4}\frac{\partial}{\partial \alpha}\Big(\sqrt{\frac{\pi}{2\alpha}}\sqrt{\frac{2\pi}{3\alpha}}
e^{-\frac{4}{3}\alpha(\sum_k p_k)^2}\Big)e^{-\alpha\big(|p_{14}|^2+(\sum_k p_k)^2\big)}$$ and $$\mathcal{R}_2=-\sum_{p\in\mathbb{Z}^4}\sum_{q\in\mathbb{Z}^2}\big(|q_{12}|^2+(\sum_k p_k)^2\big)e^{-\alpha(\chi_{(2,4)}(q,p)-m^2)}e^{-\alpha(\chi_{(4)}(p)-m^2)}.$$ The additional contribution $\mathcal{R}_1$ can be evaluated as
\bea
\mathcal{R}_1&=&-\sum_{p\in\mathbb{Z}^4}\frac{\partial}{\partial \alpha}\Big(\sqrt{\frac{\pi}{2\alpha}}\sqrt{\frac{2\pi}{3\alpha}}
e^{-\frac{4}{3}\alpha(\sum_k p_k)^2}\Big)e^{-\alpha\big(|p_{14}|^2+(\sum_k p_k)^2\big)}\cr
&=&\sqrt{\frac{\pi}{2\alpha}}\sqrt{\frac{2\pi}{3\alpha}}\Big[\frac{1}{\alpha}\sum_{p\in\mathbb{Z}^4}e^{-\frac{7}{3}\alpha(\sum_k p_k)^2-\alpha|p_{14}|^2}+\frac{4}{3}\sum_{p\in\mathbb{Z}^4}(\sum_k p_k)^2e^{-\frac{7}{3}\alpha(\sum_k p_k)^2-\alpha|p_{14}|^2}\Big]\cr
&=&\mathcal{R}_{11}+\mathcal{R}_{12}.
\eea
In the above expression
$$\mathcal{R}_{11}=\frac{1}{\alpha}\sqrt{\frac{\pi}{2\alpha}}\sqrt{\frac{2\pi}{3\alpha}}\sqrt{\frac{3\pi}{10\alpha}}\sqrt{\frac{10\pi}{17\alpha}}\sqrt{\frac{17\pi}{24\alpha}}\sqrt{\frac{24\pi}{31\alpha}}=\frac{\pi^3}{\alpha^4\sqrt{31}}$$ and  $$\mathcal{R}_{12}=\frac{4}{3}\sqrt{\frac{\pi}{2\alpha}}\sqrt{\frac{2\pi}{3\alpha}}\sum_{p\in\mathbb{Z}^4}(\sum_k p_k)^2e^{-\frac{7}{3}\alpha(\sum_k p_k)^2-\alpha|p_{14}|^2}.$$  
We also have
\bea
U=\sum_{p\in\mathbb{Z}^4}(\sum_k p_k)^2e^{-\frac{7}{3}\alpha(\sum_k p_k)^2-\alpha|p_{14}|^2}=\frac{18\pi^2}{31\alpha^3\sqrt{93}}.
\eea
Therefore $\mathcal{R}_{12}=\frac{8\pi^3}{31\alpha^4\sqrt{31}}$ and then $\mathcal{R}_{1}=\frac{39\pi^3}{31\alpha^4\sqrt{31}}$. Using the same above  argument, we can prove that $\mathcal{R}_{2}=-\frac{28\pi^3}{31\alpha^4\sqrt{31}}$. Finally, it is straightforward to check  following  relation
\bea
\sum_{p\in\mathbb{Z}^4}\sum_{q\in\mathbb{Z}^2}\frac{1}{\chi_{(4)}(p)}
\frac{(\sum_{k=1}^2q_k-\sum_{k=1}^4p_k)^2}{\chi_{(2,4)}^3(q,p)}&=\frac{11\pi^3}{62\sqrt{31}}\,\mathcal{I'}.
\eea

\end{document}